\documentclass[preprint,12pt]{elsarticle}

\usepackage{amssymb}
\usepackage{amsmath,amsfonts}
\usepackage{amsthm}

\usepackage{algorithm}
\usepackage{algorithmicx}
\usepackage{algpseudocode}
\usepackage{setspace}
\usepackage{bm}
\usepackage{pifont}
\usepackage{mathrsfs}

\usepackage{hyperref}

\usepackage{threeparttable}
\usepackage{booktabs}
\usepackage{float}
\usepackage{multirow}
\usepackage{longtable}
\usepackage{array}
\usepackage{listings}
\usepackage{url}
\usepackage{xcolor}
\usepackage{textcomp}

\usepackage[utf8]{inputenc}
\usepackage[final,tracking=true,kerning=true,spacing=true]{microtype}

\graphicspath{{Fig/}} 
\usepackage{graphicx}

\usepackage{subcaption} 

\newtheorem{theorem}{Theorem}
\newtheorem{proposition}[theorem]{Proposition}

\newcommand{\hyem}{\text{HyEm}} 
\newcommand{\tspace}{\mathcal{T}}
\newcommand{\Hspace}{\mathbb{H}}
\newcommand{\R}{\mathbb{R}}
\newcommand{\Ball}{\mathbb{B}}

\journal{ScienceDirect}

\begin{document}

\begin{frontmatter}

\title{HyEm: Query-Adaptive Hyperbolic Retrieval for Biomedical Ontologies via Euclidean Vector Indexing}

\author[label1]{Ou Deng\corref{cor1}}
\ead{dengou@toki.waseda.jp}

\affiliation[label1]{organization={Graduate School of Human Sciences, Waseda University},
            addressline={2-579-15 Mikajima}, 
            city={Tokorozawa},
            postcode={359-1192}, 
            state={Saitama},
            country={Japan}}

\author[label2]{Shoji Nishimura}
\author[label2]{Atsushi Ogihara}
\author[label2]{Qun Jin\corref{cor1}}
\ead{jin@waseda.jp}

\affiliation[label2]{organization={Faculty of Human Sciences, Waseda University},
            addressline={2-579-15 Mikajima}, 
            city={Tokorozawa},
            postcode={359-1192}, 
            state={Saitama},
            country={Japan}}

\cortext[cor1]{Corresponding authors}

\begin{abstract}
Retrieval-augmented generation (RAG) is increasingly used to ground large language models (LLMs) in biomedical knowledge.
A recurring challenge is hierarchy-aware ontology grounding: many biomedical resources (e.g., HPO, DO, MeSH) organize concepts through deep ``is-a" taxonomies, yet production retrieval stacks overwhelmingly rely on Euclidean embeddings and Euclidean approximate nearest neighbor (ANN) indexes.
Hyperbolic embeddings offer theoretical advantages for hierarchical representation, but face two practical adoption barriers in LLM systems: (i) hyperbolic nearest-neighbor search is not natively supported by most vector databases, and (ii) not every user query depends on hierarchical structure, so methods that exclusively use hyperbolic distance risk underperforming strong Euclidean baselines on entity-centric queries.

We present HyEm, a lightweight retrieval layer that integrates hyperbolic ontology embeddings into existing Euclidean ANN infrastructure without requiring specialized indexing primitives, while maintaining robustness across heterogeneous query types.
HyEm combines three components.
First, we learn hyperbolic entity embeddings under an explicit radius budget, which mitigates numerical instabilities and makes the downstream tangent-space approximation more controllable.
Second, we map Euclidean text embeddings into hyperbolic space via a compact adapter and perform candidate retrieval by storing only origin log-mapped vectors in a standard Euclidean vector database, followed by exact hyperbolic reranking on a small candidate set.
Third, we introduce a query-adaptive gate that outputs a continuous mixing weight, combining Euclidean semantic similarity with hyperbolic hierarchy distance at reranking time to accommodate queries with varying degrees of hierarchy dependence.

Our theoretical analysis builds on a bi-Lipschitz comparison between hyperbolic distance and Euclidean distance in origin normal coordinates under a radius constraint, and translates it into practical guidance for ANN oversampling and embedding dimensionality.
We further analyze how classification errors in query routing affect retrieval quality, motivating the soft mixing design.
Experiments on open biomedical ontology subsets with a stratified query taxonomy demonstrate that HyEm preserves 94--98\% of Euclidean baseline performance on entity-centric queries while substantially improving performance on hierarchy-navigation and mixed-intent queries, with maintained indexability at moderate oversampling factors.
\end{abstract}



\begin{keyword}
Hyperbolic embeddings \sep Biomedical ontology \sep Ontology grounding \sep Retrieval-augmented generation \sep Query-adaptive retrieval \sep Approximate nearest neighbor \sep Hierarchical representation
\end{keyword}

\end{frontmatter}


\section{Introduction}
\label{sec:intro}

Large language models (LLMs) have demonstrated capabilities in explaining biomedical concepts, summarizing clinical notes, and answering medical questions, yet their reliability critically depends on grounding mechanisms that connect free-form language to curated knowledge.
Retrieval-augmented generation (RAG) has emerged as a widely adopted engineering pattern: a query is encoded into a vector, a vector database retrieves relevant items, and the LLM generates an answer conditioned on the retrieved evidence.
In biomedicine, a particularly valuable evidence source is a {biomedical ontology}---or ontology-like knowledge graph---in which concepts are organized through deep \texttt{is-a} taxonomies.
This hierarchical organization is consequential in practice: the query ``What are subtypes of {cardiomyopathy}?'' requires fundamentally different retrieval than ``What does {cardiomyopathy} mean?'', despite both mentioning the same surface string.
The former demands navigation of a hierarchy (children/descendants); the latter is closer to entity linking or definition retrieval.

\paragraph{The geometry mismatch has become a deployment issue}
Hyperbolic geometry offers well-studied theoretical advantages for representing trees and taxonomies: hyperbolic volume grows exponentially with radius, mirroring the branching growth of hierarchies.
Foundational theoretical results and empirical evidence suggest that hyperbolic embeddings can represent hierarchies with substantially lower distortion than Euclidean embeddings at comparable dimension \cite{Nickel2017,Sala2018,Ganea2018,Nickel2018,Law2019}.
Despite this promise, hyperbolic retrieval layers remain uncommon in production LLM/RAG stacks, where retrieval is typically implemented as Euclidean/cosine ANN over fixed-dimensional vectors.
The practical question is therefore not only whether hyperbolic embeddings can be effective in controlled settings, but how to integrate them into the dominant Euclidean indexing interface with minimal engineering risk.
In medical software deployments, constraints such as limited backend customization and regulatory requirements can make approaches that rely on specialized hyperbolic indexes or end-to-end retraining harder to adopt.

\paragraph{A second barrier is methodological: hyperbolic is not universally better}
Recent work highlights methodological subtleties in evaluating hyperbolic representations: results can be sensitive to baseline tuning and evaluation protocols, and hyperbolic optimization can suffer from numerical issues and boundary concentration \cite{Li2020,Liu2024,VanNooten2025,Sala2018}.
This critique bears directly on ontology grounding, where many real queries are not hierarchy-navigation queries.
A retrieval system that universally prioritizes hyperbolic distance may regress on ``flat'' semantic similarity queries where cosine retrieval excels.
The question is not whether hyperbolic geometry helps, but when and how to apply it without sacrificing performance on queries that do not depend on hierarchical structure.

\paragraph{Our approach}
This paper presents \hyem{}, a {query-adaptive} hyperbolic retrieval layer engineered to minimize deployment friction.
\hyem{} treats {indexability} as a first-class design constraint: it learns hyperbolic entity embeddings under an explicit radius budget, stores only tangent-space (log-mapped) vectors in a standard Euclidean ANN index, and reranks a small candidate set using exact hyperbolic distance.
To maintain robustness under heterogeneous query intents, \hyem{} incorporates a lightweight gating module that outputs a continuous mixing weight $\alpha(q)$ and interpolates between Euclidean similarity and hyperbolic distance during reranking.
When $\alpha(q) \approx 0$, the system behaves like a Euclidean retriever; when $\alpha(q) \approx 1$, it behaves like a hierarchy-aware hyperbolic retriever.

\paragraph{Contributions}
\hyem{} makes three contributions with an emphasis on deployable design and measurable evaluation.
First, we package a hyperbolic ontology retrieval layer that fits existing Euclidean vector-DB interfaces (tangent-space indexing + exact hyperbolic reranking) and treat \emph{indexability} as an explicit, testable requirement rather than an implicit assumption.
Second, building on standard metric comparisons in normal coordinates, we make the radius--distortion relation explicit and use it to derive practical guidance for ANN oversampling and for the capacity--indexability trade-off induced by ontology depth and branching.
Third, inspired by routing and score-mixing patterns in hybrid retrieval, we introduce query-adaptive geometry (hard routing and soft mixing) and analyze/ablate how it improves hierarchy-navigation queries while preserving entity-centric performance under heterogeneous query streams.

\paragraph{Organization}
Section~\ref{sec:related} reviews background and positions our work relative to hyperbolic embedding, retrieval indexing, and hybrid retrieval.
Section~\ref{sec:setup} formalizes ontology grounding and introduces a query taxonomy that motivates query-adaptive geometry.
Section~\ref{sec:method} presents \hyem{}.
Section~\ref{sec:theory} provides the theoretical analysis.
Section~\ref{sec:experiments} specifies a lightweight experimental protocol and reports results for a reproducibility run.
Appendices provide proofs and concise benchmark/protocol details.

\section{Background and Related Work}
\label{sec:related}

\subsection{Ontology grounding and biomedical knowledge resources}
\label{sec:rw_onto}

Biomedical concept grounding and ontology-based retrieval map free-form mentions (e.g., in questions, clinical notes, or biomedical literature) to canonical identifiers in controlled vocabularies.
Large, expert-curated resources such as the Human Phenotype Ontology (HPO) \cite{Kohler2021HPO}, the Disease Ontology (DO) \cite{Schriml2022DO}, and MeSH \cite{NLMMeSH} provide substantial coverage together with an explicit hierarchical backbone (is-a and related relations), while integrative efforts such as the Monarch Initiative \cite{Mungall2017} and linked-data infrastructures such as Bio2RDF \cite{Callahan2013} connect multiple ontologies and knowledge graphs.
Because ontology concepts are often accessed through names, synonyms, and short textual definitions, modern grounding systems frequently cast candidate generation as a text retrieval problem and rely on contextual encoders derived from Transformer language models \cite{Devlin2019,Lee2019,Gu2021} and sentence-level embedding objectives \cite{Reimers2019}.

In parallel, the rise of large language models (LLMs) has renewed interest in coupling parametric knowledge with explicit symbolic resources.
General-purpose LLM scaling \cite{Brown2020} and reasoning-style prompting \cite{Wei2022} have shown strong few-shot performance, while domain-specialized LLMs target clinical and biomedical tasks \cite{Singhal2022}.
Recent work also explores using LLMs for scientific information extraction and knowledge base construction \cite{Dagdelen2024}.
For ontology engineering specifically, LLMs have been studied as assistants for proposing missing concepts/relations \cite{Zhou2025} and for generating training data via ontology verbalization \cite{Zaitoun2024}; complementary ontology design patterns such as compliance/reward modeling highlight the breadth of structured knowledge needed in real biomedical pipelines \cite{Peleg2024CaRO}.
These trends suggest the potential value of retrieval designs that (i) can exploit hierarchical structure when it is relevant, but (ii) remain robust for ``flat'' similarity queries that do not depend on depth.

\subsection{Hyperbolic representations for hierarchies}
\label{sec:rw_hyp}

Hyperbolic embeddings have become a widely used tool for representing taxonomies following the introduction of Poincar\'{e} embeddings \cite{Nickel2017} and subsequent analyses of representational trade-offs \cite{Sala2018}.
One key motivation is geometric: in negatively curved spaces, volume grows exponentially with radius, which can mirror the node growth in trees and may support lower-distortion embeddings of deep hierarchies \cite{Mettes2024}.
In practice, optimization and numerical stability considerations have driven the use of alternative models of hyperbolic space.
In particular, the Lorentz (hyperboloid) model may improve stability compared to the Poincar\'{e} ball when embeddings approach large radii \cite{Nickel2018,Law2019}, and Riemannian optimization methods can provide principled updates on curved manifolds \cite{Bonnabel2013,Becigneul2019}.

Beyond distance-based objectives, several lines of work encode hierarchy more explicitly.
Partial-order formalisms such as entailment cones aim to represent ancestor/descendant structure via angular constraints \cite{Ganea2018}, and more recent variants further refine these order-theoretic constructions \cite{Yu2024}.
Hyperbolic geometry has also been integrated into neural architectures and representation learning pipelines, including hyperbolic attention mechanisms \cite{Gulcehre2018, Yang2023}, hyperbolic graph neural networks \cite{Zhou2023}, and general tutorials/surveys that systematize gyrovector operations, exp/log maps, and curvature choices \cite{Zhou2023}.
At the lexical level, hyperbolic word embeddings such as Poincar\'{e} GloVe suggest compatibility between distributional semantics and hierarchical organization \cite{Tifrea2019}, while continuous tree and hierarchical clustering objectives aim to connect representation learning with taxonomy induction \cite{Monath2019}.
From the viewpoint of downstream learning, hyperbolic losses and embeddings have been applied to hierarchical prediction problems, including recent evidence that performance gains can depend sensitively on evaluation choices and hierarchy depth \cite{VanNooten2025}.

A closely related thread is knowledge graph (KG) representation learning, where Euclidean embedding families such as TransE/TransR-style translations and bilinear/complex-valued models \cite{Bordes2013, Yang2015, Trouillon2016, Sun2019,Le2023} remain strong baselines and are surveyed in broader relational learning over KGs \cite{Nickel2016}.
Hyperbolic KG models (e.g., MuRP and related approaches) exploit curvature to compress multi-relational graphs with latent hierarchies \cite{Balazevic2019, Chami2019, Chami2020, Zheng2024}, and graph neural networks provide complementary Euclidean relational baselines \cite{Kipf2017, Schlichtkrull2018, Velickovic2018}.
In the biomedical domain, hyperbolic embeddings have been used to encode ontology structure and improve representation quality for clinical codes and gene/phenotype resources \cite{Choi2016,Kim2021}.
More recently, language models have also been studied as hierarchy encoders, suggesting that text encoders can be trained to respect (or reveal) hierarchical organization \cite{He2024}.
\hyem{} builds on these insights by training radius-controlled hyperbolic concept embeddings and exposing both Euclidean and hyperbolic similarity signals for retrieval.

\subsection{When Euclidean baselines match hyperbolic methods}
\label{sec:rw_negative}

The hyperbolic embedding literature reports both strong gains on hierarchy-sensitive tasks \cite{Nickel2017,Sala2018} and negative or mixed evidence in other settings \cite{Dhingra2018}.
A recurring observation is that results can be sensitive to optimization stability, curvature mismatch, and evaluation protocols, and that gains over Euclidean baselines can diminish when Euclidean baselines are strongly tuned \cite{Muscoloni2016, Sala2018,Li2020FL,Liu2024,VanNooten2025}.
This is particularly relevant for ontology grounding because many user queries are not ``about depth'' (e.g., synonymy and sibling disambiguation), and real ontologies often contain cross-links, multiple inheritance, and non-tree relations that may dilute purely hierarchical signals \cite{Nickel2016, Schlichtkrull2018}.
Empirically, even when hierarchies exist, improvements can concentrate in deep, ancestry-sensitive queries, while shallow or noisy regions may not benefit and can even regress if embeddings drift toward the Poincar\'{e} boundary \cite{Nickel2018,Law2019}.

These observations suggest two practical design principles.
First, claims should be appropriately scoped: hyperbolic geometry is not universally superior, but it may be particularly well suited to deep hierarchy navigation and ancestor-aware ranking \cite{Nickel2017,Ganea2018,Yu2024}.
Second, deployment-oriented methods may benefit from mechanisms that avoid harming queries that do not depend on hierarchical structure.
\hyem{} follows both principles by (i) explicitly controlling embedding radius to mitigate boundary issues \cite{Nickel2018,Law2019} and (ii) introducing query-adaptive geometry via soft mixing, which is conceptually aligned with gating and mixture patterns widely used in modern neural systems \cite{Shazeer2017}.

\subsection{Hyperbolic nearest-neighbor search and deployment constraints}
\label{sec:rw_nns}

Efficient nearest-neighbor search (NNS) is important for scalable retrieval, and modern vector databases typically expose Euclidean or cosine similarity over fixed-dimensional vectors, often implemented with graph-based ANN structures and compressed indices \cite{Malkov2020,Johnson2021}.
Hyperbolic NNS has therefore attracted increasing attention as hyperbolic embeddings began to scale beyond toy hierarchies.
Several approaches design bespoke index structures, derive probabilistic bounds, or reduce hyperbolic NNS to repeated Euclidean queries \cite{Prokhorenkova2022}.
Recent work also studies hyperbolic ANN from a more algorithmic perspective \cite{KisfaludiBak2024,Park2025} and proposes compression/indexing schemes tailored to curved geometry, such as hyperbolic product quantization \cite{Qiu2024}.

While valuable, existing hyperbolic NNS methods often assume either specialized index structures or the ability to deploy non-Euclidean distance functions in the retrieval backend.
In many LLM/RAG deployments, retrieval is exposed as a relatively fixed Euclidean/cosine vector-DB API: one can insert vectors, query by vectors, and optionally rerank a modest candidate set.
A common compromise in the hyperbolic literature is therefore to leverage tangent-space (log-map) representations, which make local neighborhoods approximately Euclidean \cite{Nickel2018,Gulcehre2018,Yang2023,Zhou2023}.
\hyem{} builds on this compromise: it issues Euclidean ANN search on log-mapped vectors and then performs exact hyperbolic reranking on a small candidate set.
Compared with generic tangent-space pipelines, \hyem{} makes two aspects explicit for ontology grounding: (i) a radius budget that keeps tangent-space distortion controlled, and (ii) an empirical indexability stress test that quantifies how much ANN oversampling is needed before reranking recovers the true hyperbolic top-$k$ neighbors.

\subsection{Hybrid retrieval and query routing in RAG systems}
\label{sec:rw_router}

Retrieval-augmented generation (RAG) pipelines increasingly combine heterogeneous retrievers and scoring functions, mixing dense neural retrieval with lexical retrieval and reranking. Classic sparse baselines such as BM25 remain competitive and widely used in practice \cite{Robertson2009}, while dense dual-encoder retrievers such as DPR popularized end-to-end learned retrieval for open-domain QA \cite{Karpukhin2020}.
RAG models then couple such retrieval with generation \cite{Lewis2020RAG}, and downstream systems often strengthen evidence aggregation with multi-passage generative readers \cite{Izacard2021} or retrieval-augmented LMs that are trained with retrieval in the loop \cite{Izacard2022,Gao2022HyDE}.
Architectures that improve retrieval efficiency via lightweight reranking or late interaction (e.g., token-level matching) further blur the boundary between retrieval and scoring \cite{Khattab2020}.

A common pattern in production RAG is to use query classification or learned routing to decide which retriever (or which mixture of scores) should dominate for a given query.
Recent ``selective RAG'' and routing work explicitly trains models to decide when to retrieve, or which retrieval-augmented model to use, based on the query and intermediate signals \cite{Asai2024}.
At the same time, hyperbolic geometry has begun to appear inside graph-based RAG variants as a way to represent abstraction depth and hierarchical containment \cite{He2025}.
More broadly, hybrid retrieval often combines multiple similarity signals, but those signals are usually defined within a single geometry (most commonly Euclidean/cosine).
\hyem{} adapts routing and score mixing to a \emph{geometry} choice: it treats Euclidean similarity and hyperbolic distance as complementary signals over the {same} ontology store, learns a query-dependent mixing weight $\alpha(q)$, and constrains the hyperbolic component via radius-controlled training so that it remains compatible with Euclidean ANN indexing.

\section{Problem Setup}
\label{sec:setup}

\subsection{Ontology grounding as retrieval}
\label{sec:setup_task}

We consider an ontology or ontology-like biomedical knowledge graph $G=(V,E)$.
Each node $v\in V$ denotes a standardized biomedical entity (phenotype, disease, procedure).
We assume the graph is primarily structured by an \texttt{is-a} relation that defines a directed acyclic graph (DAG) $E_{\prec}\subseteq E$, and that each node has a short text field $\tau(v)$ consisting of a preferred label plus optional synonyms/definition.
A retrieval module receives a natural-language query $q$ and returns a ranked list of entities $\pi(q)=(v_1,\ldots,v_k)$.
In an LLM system, the retrieved entities can be used as tool outputs (entity normalization), as context expansion (retrieving parents/children/definitions), or as structured constraints.
This paper focuses on evaluating the retrieval module itself, independent of downstream generation.

\subsection{A query taxonomy for ontology grounding}
\label{sec:taxonomy}

Ontology grounding queries exhibit considerable heterogeneity.
We identify three recurring query families, which we later use to construct lightweight benchmark sets.

\paragraph{Entity-centric queries}
These queries primarily ask ``what is $X$?'' or provide a synonym/description that should map to a single entity.
They can often be effectively addressed using Euclidean semantic similarity (cosine) between text embeddings.

\paragraph{Taxonomy-navigation queries}
These queries explicitly request hierarchical navigation, for example ``subtypes of $X$'' (children/descendants), ``broader category of $X$'' (parents/ancestors), or ``classification path from $A$ to $B$''.
Here, correctness is primarily determined by the \texttt{is-a} structure rather than by flat semantic similarity.

\paragraph{Mixed-intent queries}
These queries involve both semantic similarity and hierarchical constraints.
An example is ``diseases related to $X$ at the same specificity level'', which combines similarity with a depth constraint.
Mixed-intent queries suggest the potential value of soft mixing: rather than relying on a binary choice, the method can interpolate between Euclidean and hyperbolic signals.

\subsection{Notation}
\label{sec:notation}

Let $f_{\text{text}}(\cdot)$ be a sentence encoder that maps text to Euclidean vectors $e\in\R^{d_e}$.
We denote entity text embeddings by $e_v=f_{\text{text}}(\tau(v))$ and query embeddings by $e_q=f_{\text{text}}(q)$.
\hyem{} additionally learns hyperbolic entity embeddings $x_v\in\Hspace^{d}$ and a lightweight adapter $g$ that maps $e_q$ into $x_q=g(e_q)$.

\section{\hyem{}: Query-Adaptive, Indexable Hyperbolic Retrieval}
\label{sec:method}

\subsection{Design goals and overview}
\label{sec:design_goals}

\hyem{} is designed around three goals.
First, the method should be compatible with existing Euclidean vector databases and ANN engines, so that deployment does not require new indexing primitives.
Second, the method aims to preserve the potential hierarchy-aware benefits of hyperbolic geometry on taxonomy-navigation queries.
Third, because real query streams are heterogeneous, the method aims to avoid negatively affecting entity-centric queries where Euclidean retrieval is already strong.

Figure~\ref{fig:pipeline} illustrates \hyem{}.
Entity embeddings are trained in hyperbolic space under a radius constraint.
At indexing time, we store only their origin log-mapped vectors in a Euclidean ANN index.
At query time, we compute both a Euclidean text embedding (for semantic similarity) and a hyperbolic query embedding (for hierarchy-aware distance).
A lightweight gate outputs $\alpha(q)\in[0,1]$ and controls a mixed reranking score.

\begin{figure}[t]
  \centering
  \includegraphics[width=\linewidth]{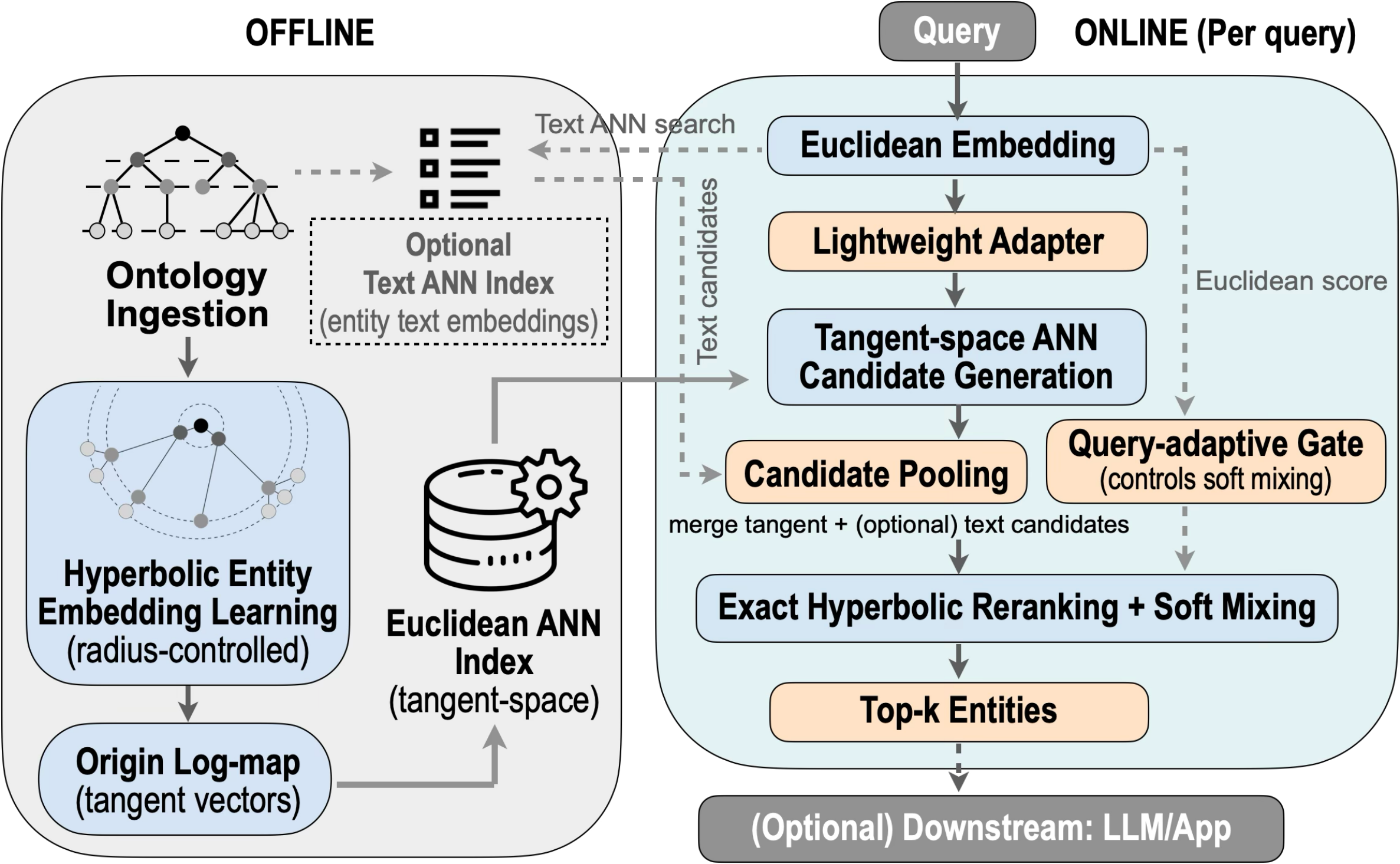}
\caption{Overall HyEm pipeline. Offline, HyEm ingests an ontology, learns radius-controlled hyperbolic entity embeddings, applies an origin log-map to obtain tangent vectors, and builds a standard Euclidean ANN index in the tangent space. Optionally, HyEm also builds a text ANN index over entity text embeddings. Online (per query), the query is encoded into a Euclidean embedding and mapped by a lightweight adapter for tangent-space ANN candidate generation. Candidate pooling merges tangent candidates with optional text candidates, followed by exact hyperbolic reranking with a query-adaptive gate that controls soft mixing between hyperbolic and Euclidean scores. The output is the top-k retrieved entities, optionally consumed by a downstream LLM/application.}
  \label{fig:pipeline}
\end{figure}

\paragraph{Offline vs. online paths}
Figure~\ref{fig:pipeline} separates two phases.
\textbf{Offline}: we learn hyperbolic entity representations and store their tangent vectors $u_v=\log_0(x_v)$ in a standard Euclidean ANN index;
this is where we interact with the vector database.
\textbf{Online}: a text query is embedded in Euclidean space, mapped by a lightweight adapter into the same tangent space, and then used for ANN search.
Importantly, the logarithmic map is not applied to queries at runtime---the adapter outputs tangent coordinates directly---so the ``hyperbolic'' component
appears only as (i) a different training geometry and (ii) a reranking distance computed on a small candidate set.

\paragraph{End-to-end retrieval pipeline}
Algorithm~\ref{alg:hyem_index} summarizes the indexing and query-time steps, including optional candidate pooling and soft mixing reranking.

\begin{algorithm}[t]
\small
\caption{\hyem{}: tangent-space indexing, optional candidate pooling, and soft mixing reranking}
\label{alg:hyem_index}
\begin{algorithmic}[1]
\Require Ontology nodes $V$ with text $\tau(v)$ and hierarchy edges $E_{\prec}$; 
Text encoder $f_{\text{text}}$ (frozen or lightly tuned);
Trainable adapter $g(e)=\exp_0(We+b)$;
Optional gate $\alpha(q)=\sigma(w_g^\top e_q+b_g)$.
\State \textbf{Train entity embeddings:} optimize $\{x_v\}$ and adapter parameters by minimizing $\mathcal{L}_{\text{hier}}+\lambda_{\text{text}}\mathcal{L}_{\text{text}}+\lambda_R\mathcal{L}_R$.
\State \textbf{Build indexes:}
\State \quad compute $u_v\leftarrow \log_0(x_v)$ for all $v$ and build Euclidean ANN index $\mathcal{I}_H$ on $\{u_v\}$.
\State \quad compute text vectors $e_v\leftarrow f_{\text{text}}(\tau(v))$ and build Euclidean ANN index $\mathcal{I}_E$ on $\{e_v\}$ (cosine or inner product).
\State \textbf{Query($q,k,L_H,L_E$):}
\State \quad $e_q\leftarrow f_{\text{text}}(q)$
\State \quad $x_q\leftarrow g(e_q)$ and $u_q\leftarrow \log_0(x_q)$
\State \quad $\alpha\leftarrow \alpha(q)$ \Comment{If no gate, set $\alpha\leftarrow 1$ or tune globally.}
\State \quad $C_H\leftarrow \mathrm{ANN}(\mathcal{I}_H,u_q,L_H)$
\State \quad $C_E\leftarrow \mathrm{ANN}(\mathcal{I}_E,e_q,L_E)$ \Comment{Optional; can be skipped for latency.}
\State \quad $C\leftarrow C_H\cup C_E$
\State \quad for $v\in C$: compute $s_H\leftarrow -d_{\Hspace}(x_q,x_v)$ and $s_E\leftarrow \cos(e_q,e_v)$
\State \quad score$(v)\leftarrow \alpha\,s_H+(1-\alpha)\,s_E$ \Comment{Optionally use temperature scaling.}
\State \quad \textbf{return} top-$k$ entities by score
\end{algorithmic}
\end{algorithm}

\subsection{Geometry choice: Lorentz implementation with model-agnostic analysis}
\label{sec:geom}

The theoretical analysis in Section~\ref{sec:theory} is model-agnostic: it uses intrinsic hyperbolic geometry and normal coordinates.
For implementation, however, numerical stability matters.
We therefore emphasize the Lorentz (hyperboloid) model during optimization, following prior observations that it reduces instabilities near large radii \cite{Nickel2018,Law2019}.
Concretely, we represent each point as $y\in\R^{d+1}$ on the hyperboloid $\langle y,y\rangle_L=-1$ with $y_0>0$ under the Lorentzian inner product.
Distances are computed as $d_{\Hspace}(y_1,y_2)=\operatorname{arcosh}(-\langle y_1,y_2\rangle_L)$.
We use standard Lorentz log/exp maps for optimization (\ref{app:math}). 
When convenient for exposition, we use the Poincar\'e ball notation $x\in\Ball^d$; the two parameterizations are isometric.

\subsection{Radius-constrained hyperbolic entity embeddings}
\label{sec:entity_embed}

Let $x_v\in\Hspace^d$ denote the hyperbolic embedding of entity $v$.
Unconstrained hyperbolic training can push deep nodes toward the boundary (in the Poincar\'e picture) or to very large norms (in Lorentz coordinates), which may amplify metric distortion when one later approximates hyperbolic neighborhoods in tangent space.
Because \hyem{} relies on tangent-space ANN indexing, we treat the maximum radius as a controllable resource.

\paragraph{Radius budget}
We introduce a radius budget $R>0$ defined in origin normal coordinates.
Writing $u_v=\log_0(x_v)$ for the origin log map, we enforce $\lVert u_v\rVert\le R$ for all entities.
This budget serves two purposes.
It aims to prevent numerical instabilities associated with extreme radii, and it bounds the multiplicative distortion between hyperbolic distance and Euclidean distance in tangent space (Theorem~\ref{thm:distortion}).

\paragraph{Training objective}
We learn $\{x_v\}$ using a lightweight combination of hierarchy supervision and text alignment.
The hierarchy term is designed to encourage parent--child proximity and monotonic increase of radius with depth.
The text term aims to align the hyperbolic embedding with a frozen (or lightly tuned) text encoder, with the goal that entity retrieval can remain feasible from natural language.
The overall loss is
\begin{equation}
\mathcal{L}=\mathcal{L}_{\text{hier}} + \lambda_{\text{text}}\mathcal{L}_{\text{text}} + \lambda_R\mathcal{L}_R.
\end{equation}
We deliberately keep these losses simple to facilitate reproducibility.
\ref{app:exp} specifies a concrete recipe that can be implemented with standard Python toolkits. 

\paragraph{Hierarchy term}
For each \texttt{is-a} edge $(p\prec c)\in E_{\prec}$, we penalize large hyperbolic distance and enforce the expected radial ordering by a margin.
We additionally use negative sampling to separate children from non-descendants.
This provides a ranking-style loss that is computationally efficient.
While partial-order formalisms (e.g., cones) are compatible with \hyem{}, we treat them as optional extensions rather than core dependencies.

\paragraph{Radius penalty}
We implement the radius budget as a soft penalty $\mathcal{L}_R=\sum_v \max(0,\lVert u_v\rVert-R)^2$ and, optionally, as explicit clipping in tangent space after each update.
The soft penalty is differentiable and works in both Lorentz and Poincar\'e parameterizations.

\subsection{A lightweight Euclidean-to-hyperbolic adapter}
\label{sec:adapter}

Given a query text $q$, a sentence encoder produces a Euclidean embedding $e_q\in\R^{d_e}$.
\hyem{} maps it into hyperbolic space via a small adapter $g$ defined in origin tangent space:
\begin{equation}
  x_q=g(e_q)=\exp_0\big(We_q+b\big),
\end{equation}
where $W\in\R^{d\times d_e}$ and $b\in\R^{d}$ are trainable parameters and $\exp_0$ is the hyperbolic exponential map.
This adapter is deliberately kept lightweight: it can be trained on automatically generated query--entity pairs derived from ontology synonyms and hierarchy templates, without requiring any LLM finetuning.
In ablations we also consider a two-layer MLP in tangent space, but our experiments suggest that a linear adapter is often sufficient.

\subsection{Indexable retrieval: tangent-space candidates and hyperbolic reranking}
\label{sec:retrieval}

\hyem{} aims to make hyperbolic retrieval compatible with Euclidean ANN by indexing only tangent vectors.
For each entity, we store $u_v=\log_0(x_v)\in\R^d$ in a standard Euclidean vector database.
At query time, we compute $x_q=g(e_q)$ and $u_q=\log_0(x_q)$.
Candidate retrieval is a single Euclidean ANN query that returns the top-$L$ nearest tangent vectors to $u_q$.
We then rerank those candidates by exact hyperbolic distance $d_{\Hspace}(x_q,x_v)$.
The radius budget is designed to control how well this candidate set approximates the true hyperbolic neighborhood.

This design offers a practical advantage: only the reranker needs hyperbolic distance computations, and it runs on at most a few hundred candidates.
The vector database itself remains entirely Euclidean.

\subsection{Query-adaptive geometry via soft mixing}
\label{sec:gating}

A core motivation for \hyem{} is that real query streams are heterogeneous.
For entity-centric queries, Euclidean cosine similarity can be a strong signal.
For taxonomy-navigation queries, hyperbolic distance may provide an inductive bias for depth separation and subtree structure.
Rather than relying on a binary choice, we introduce a lightweight gate that outputs a continuous mixing weight $\alpha(q)\in[0,1]$.

\paragraph{Gate architecture and training}
We use a minimal logistic gate on the Euclidean query embedding:
\begin{equation}
\alpha(q)=\sigma(w^\top e_q + b),
\end{equation}
where $\sigma$ is the sigmoid.
{Training is a separate lightweight step}: after training the entity embeddings and adapter, we freeze them and fit $(w,b)$ as a binary classifier on automatically labeled Q-H vs.~Q-E queries using standard cross-entropy.
In our reproducible benchmark, labels come directly from the query templates (Section~\ref{sec:exp_taxonomy}), so no manual annotation is required.
Because the gate only controls score mixing at reranking time (and does not affect candidate generation), this decoupled training approach has proven sufficient in our experiments; end-to-end joint training is an optional extension.

\paragraph{Mixed reranking score}
For each candidate entity $v$ we compute two scores.
The Euclidean score is semantic similarity $s_E(v|q)=\cos(e_q,e_v)$.
The hyperbolic score is the negative geodesic distance $s_H(v|q)=-d_{\Hspace}(x_q,x_v)$.
We then combine them as
\begin{equation}
\text{score}(v|q)=\alpha(q)\,s_H(v|q) + (1-\alpha(q))\,s_E(v|q).
\label{eq:mix_score}
\end{equation}
When $\alpha(q)$ is near one, reranking behaves like hyperbolic retrieval; when it is near zero, reranking behaves like Euclidean retrieval.
This ``soft'' design aims to avoid hard failures caused by misclassification and to address concerns that hyperbolic methods can be brittle under strong Euclidean baselines.

\paragraph{Candidate pooling}
Mixing scores is only useful if the correct entity appears in the candidate pool.
In experiments, we therefore consider two candidate sources: $C_H$ from tangent-space ANN on $u_v$ and $C_E$ from Euclidean ANN on $e_v$.
We take the union $C=C_H\cup C_E$ and rerank using Eq.~\eqref{eq:mix_score}.
This keeps engineering simple (two ANN queries) and is designed to make the method robust even when one candidate generator is imperfect.
In latency-sensitive settings, one may skip $C_E$ or make $L_E$ query-dependent; we treat this as an implementation choice rather than a modeling novelty.

\section{Theory: Indexability, Capacity, and Routing Robustness}
\label{sec:theory}

This section provides explicit bounds that connect geometry to retrieval engineering choices.
Proofs and additional mathematical details are in \ref{app:math}. 

\subsection{A bi-Lipschitz bound for tangent-space indexing}
\label{sec:bilyps}

\hyem{} indexes $u_v=\log_0(x_v)$ in a Euclidean ANN engine.
A key question is when Euclidean distances between tangent vectors preserve the ranking induced by true hyperbolic distances.
The answer depends on how far from the origin embeddings are allowed to move.

\begin{theorem}[Tangent-space distortion under radius $R$]
\label{thm:distortion}
Let $x=\exp_0(u)$ and $y=\exp_0(v)$ with $\lVert u\rVert,\lVert v\rVert\le R$.
Then the hyperbolic distance satisfies
\begin{equation}
  \lVert u-v\rVert \;\le\; d_{\Hspace}(x,y) \;\le\; \kappa(R)\,\lVert u-v\rVert,
  \qquad \text{where } \kappa(R)=\frac{\sinh(R)}{R}.
\end{equation}
\end{theorem}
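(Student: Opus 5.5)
The plan is to work entirely in origin normal coordinates and compare Riemannian metrics pointwise. By the definition of $\exp_0$, the map $u\mapsto \exp_0(u)$ is a global diffeomorphism $\R^d\to\Hspace^d$ (Cartan--Hadamard, curvature $-1$). In polar coordinates $u=r\theta$ with $r=\lVert u\rVert$ and $\theta\in S^{d-1}$, the pulled-back hyperbolic metric takes the standard form
\begin{equation*}
  g_{\Hspace} \;=\; dr^2 + \sinh^2(r)\,d\theta^2 ,
\end{equation*}
while the Euclidean metric on $\R^d$ is $g_E = dr^2 + r^2\,d\theta^2$. One can verify this either from the Jacobi field computation along radial geodesics, or directly from the Lorentz formula $\exp_0(u)=(\cosh\lVert u\rVert,\ \sinh\lVert u\rVert\,u/\lVert u\rVert)$ given in \ref{app:math}. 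Both bounds then reduce to length comparisons of curves.

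\textbf{Lower bound.} Since $\sinh r\ge r$ for all $r\ge 0$, we have $g_{\Hspace}\ge g_E$ as quadratic forms at every point of $\R^d$. Hence every piecewise smooth curve $\gamma$ in $\R^d$ satisfies $\mathrm{len}_{\Hspace}(\exp_0\circ\gamma)\ge \mathrm{len}_E(\gamma)$. Apply this to $\gamma=\log_0$ of the hyperbolic geodesic from $x$ to $y$. Its Euclidean length is at least $\lVert u-v\rVert$, since straight segments minimize Euclidean length, and this gives $d_{\Hspace}(x,y)\ge\lVert u-v\rVert$. This direction needs no radius assumption. That is important, because the pulled-back geodesic may leave the ball $\{\lVert w\rVert\le R\}$, and the pointwise comparison is global. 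An alternative route is the CAT$(0)$ / Toponogov comparison at the origin, and I would mention it as a remark.

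\textbf{Upper bound.} Take the straight segment $\gamma(t)=(1-t)u+tv$. By convexity of the Euclidean ball, it stays in $\{\lVert w\rVert\le R\}$. On that ball, the function $r\mapsto \sinh(r)/r$ is increasing (its derivative has the sign of $r\cosh r-\sinh r\ge 0$). So $\sinh^2(r)\le \kappa(R)^2 r^2$, and since $\kappa(R)\ge 1$, we get $g_{\Hspace}\le \kappa(R)^2 g_E$ pointwise there. Therefore
\begin{equation*}
  d_{\Hspace}(x,y)\;\le\;\mathrm{len}_{\Hspace}(\exp_0\circ\gamma)\;\le\;\kappa(R)\,\mathrm{len}_E(\gamma)\;=\;\kappa(R)\lVert u-v\rVert .
\end{equation*}

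\textbf{Main obstacle.} The only step requiring care is the rigorous justification of the polar form of the pulled-back metric, in particular the orthogonal splitting into radial and angular parts (Gauss lemma) and the factor $\sinh^2 r$. The difficulty includes the behavior at $r=0$, where polar coordinates degenerate. I would handle this by decomposing $d\exp_0$ at $u$ into the radial direction, on which it is an isometry, and its orthogonal complement, on which it scales lengths by exactly $\sinh(r)/r$. This follows from Jacobi fields $J(t)=\sinh(t)E(t)$ in constant curvature $-1$, and it avoids coordinate singularities. The remaining monotonicity of $\sinh(r)/r$ is routine.
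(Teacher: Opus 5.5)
Your proposal is correct and follows essentially the same route as the paper: polar normal coordinates comparing $dr^2+\sinh^2(r)\,d\theta^2$ with $dr^2+r^2\,d\theta^2$, the global inequality $\sinh r\ge r$ plus an infimum over curves for the lower bound, and the straight segment in the convex tangent ball with monotonicity of $\sinh(r)/r$ (and $\kappa(R)\ge 1$) for the upper bound. One aside needs correcting: the pulled-back hyperbolic geodesic cannot leave $\{\lVert w\rVert\le R\}$, because metric balls in $\mathbb{H}^d$ are geodesically convex, but your lower-bound argument does not depend on this anyway.
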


Theorem~\ref{thm:distortion} is a standard comparison between the hyperbolic metric and the Euclidean metric in normal coordinates; we restate it to make the radius--indexability connection explicit for retrieval engineering.
When $R$ is small, $\kappa(R)$ is close to one and tangent-space ANN provides an accurate approximation to hyperbolic neighborhoods.
As $R$ grows, distortion increases approximately as $e^{R}/(2R)$.
This relationship motivates the design choice to constrain radii during training.

\subsection{From distortion to a candidate oversampling rule}
\label{sec:oversample}

Candidate retrieval in \hyem{} retrieves the top-$L$ Euclidean neighbors in tangent space and reranks them by exact hyperbolic distance.
Because tangent-space distances can distort the true hyperbolic ordering, we typically set $L\gg k$.
Theorem~\ref{thm:distortion} implies that tangent balls map to hyperbolic balls up to a multiplicative expansion by $\kappa(R)$.
This yields a practical sufficient condition for rank preservation.

\begin{proposition}[Rank stability under a multiplicative gap]
\label{prop:rank}
Fix a query point $x_q$ and assume $\lVert\log_0(x_v)\rVert\le R$ for all entities.
Let $v_{(k)}$ be the $k$-th nearest neighbor of $x_q$ under hyperbolic distance and let $d_k=d_{\Hspace}(x_q,x_{v_{(k)}})$.
If every non-top-$k$ entity $v$ satisfies $d_{\Hspace}(x_q,x_v) > \kappa(R)\, d_k$, then the hyperbolic top-$k$ set equals the Euclidean top-$k$ set in tangent space.
\end{proposition}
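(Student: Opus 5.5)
The plan is to reduce the statement to a strict separation of tangent-space distances. Every hyperbolic top-$k$ entity should be no farther than $d_k$ from the query in tangent coordinates. Every other entity should be strictly farther than $d_k$. Once that separation holds, the $k$ smallest Euclidean distances $\lVert u_q-u_v\rVert$ are attained exactly on the hyperbolic top-$k$ set, and the two sets coincide. Both halves come from the two inequalities of Theorem~\ref{thm:distortion}, applied to the pair $(x_q,x_v)$ with $u_q=\log_0(x_q)$ and $u_v=\log_0(x_v)$.

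\textbf{Step 1: query radius.} Theorem~\ref{thm:distortion} needs both endpoints inside the radius budget, so I must first secure $\lVert u_q\rVert\le R$. The proposition only states the budget for entities. I would make this explicit as part of the setup: the adapter output $We_q+b$ is clipped to norm at most $R$, consistent with the tangent-space clipping of Section~\ref{sec:entity_embed}. Alternatively, one can simply read the hypothesis as covering the query as well. I expect this to be the only real obstacle. If it is not granted, the comparison constant would instead be $\kappa(\max(R,\lVert u_q\rVert))$, and the same argument would go through with that constant.

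\textbf{Step 2: upper bound for top-$k$ entities.} For any $w$ in the hyperbolic top-$k$ set, the lower inequality of Theorem~\ref{thm:distortion} gives
\[
\lVert u_q-u_w\rVert\le d_{\Hspace}(x_q,x_w)\le d_k .
\]

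\textbf{Step 3: lower bound for the remaining entities.} For any $v$ outside the top-$k$ set, the upper inequality gives $d_{\Hspace}(x_q,x_v)\le\kappa(R)\lVert u_q-u_v\rVert$. Combining this with the gap hypothesis yields
\[
\lVert u_q-u_v\rVert\ \ge\ \frac{d_{\Hspace}(x_q,x_v)}{\kappa(R)}\ >\ d_k .
\]

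\textbf{Step 4: conclusion.} Steps 2 and 3 together show that every top-$k$ tangent distance is at most $d_k$, and every other tangent distance strictly exceeds $d_k$. Hence the Euclidean $k$-nearest set in tangent space is exactly the hyperbolic top-$k$ set. The separation is strict, so it is unambiguous even if ties occur inside the top-$k$ set. As a side remark, the same inequalities show that ANN with any $L\ge k$ returns a superset of the true top-$k$, so exact hyperbolic reranking recovers it. This is the practical reading of the proposition for the oversampling rule.
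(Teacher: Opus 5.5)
Your proof is correct and is the argument the paper intends. The paper gives no separate proof of Proposition~\ref{prop:rank}; it only remarks that Theorem~\ref{thm:distortion} makes tangent balls and hyperbolic balls agree up to the factor $\kappa(R)$, which is exactly your Steps 2--4, and your Step 1 is a worthwhile addition. Applying Theorem~\ref{thm:distortion} to the pair $(x_q,x_v)$ also requires $\lVert\log_0(x_q)\rVert\le R$. Strictly this is needed only for the upper inequality in Step 3, since the lower one holds globally because $\sinh r\ge r$. The statement leaves this condition implicit, and the unclipped adapter $\exp_0(We_q+b)$ does not guarantee it.
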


The strict gap condition is rarely met uniformly, but Proposition~\ref{prop:rank} suggests an engineering heuristic: keep $R$ moderate and use an oversampling factor $L/k$ that increases with $\kappa(R)$.
Section~\ref{sec:exp_indexability} proposes a stress test that empirically measures recall@$k$ as a function of $L$ and $R$.

\subsection{Capacity--indexability trade-off}
\label{sec:capacity}

Constraining radii improves indexability, but it may reduce how well a hierarchy can be represented.
A standard way to quantify this trade-off is via hyperbolic volume growth: hyperbolic space supports exponentially many separated points within radius $R$, roughly $\exp((d-1)R)$.
We use this fact to obtain a simple lower bound connecting ontology size/depth to the necessary radius.

\begin{proposition}[Radius needed for a $b$-ary hierarchy]
\label{prop:capacity}
Consider a $b$-ary tree of depth $D$ (about $b^D$ leaves).
Any embedding that assigns distinct leaf nodes to points with minimum pairwise hyperbolic separation $\ge \varepsilon$ must have maximum radius
\begin{equation}
  R \;\gtrsim\; \frac{D\log b}{d-1} - O\!\left(\frac{\log(1/\varepsilon)}{d-1}\right).
\end{equation}
\end{proposition}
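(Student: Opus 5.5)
This is a volume-packing argument: compare the volume of the disjoint small balls around the leaves with the volume of one large ball that contains them all.

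\emph{Step 1 (disjointness and containment).} Let $R$ be the maximum radius of the embedding, measured as $\lVert\log_0(x_v)\rVert = d_{\Hspace}(0,x_v)$. Take the $N\approx b^D$ leaf images and place a geodesic ball of radius $\varepsilon/2$ around each. The pairwise separation is at least $\varepsilon$, so by the triangle inequality these balls are pairwise disjoint. Each of them lies inside the ball $B(0,R+\varepsilon/2)$. Additivity of volume therefore gives
\begin{equation}
N\,\mathrm{Vol}\big(B(\varepsilon/2)\big)\le \mathrm{Vol}\big(B(R+\varepsilon/2)\big).
\end{equation}

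\emph{Step 2 (volume bounds).} Hyperbolic balls have volume
\begin{equation}
\mathrm{Vol}(B(r))=\omega_{d-1}\int_0^r\sinh^{d-1}(t)\,dt .
\end{equation}
For the upper bound, use $\sinh t\le e^t/2$ to get
\begin{equation}
\mathrm{Vol}(B(r))\le \omega_{d-1}\,\frac{e^{(d-1)r}}{(d-1)2^{d-1}} .
\end{equation}
For the lower bound at small radius, use $\sinh t\ge t$ to get
\begin{equation}
\mathrm{Vol}(B(\varepsilon/2))\ge \omega_{d-1}\,\frac{(\varepsilon/2)^d}{d} .
\end{equation}
This lower bound holds for every $\varepsilon$. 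If $\varepsilon$ is large, one can also use $\sinh t\ge (e^t-1)/2$, but that only improves the bound.

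\emph{Step 3 (take logarithms).} Combining the two steps gives
\begin{equation}
D\log b \le (d-1)(R+\varepsilon/2) + d\log(2/\varepsilon) + O(\log d).
\end{equation}
Dividing by $d-1$ yields
\begin{equation}
R\ge \frac{D\log b}{d-1}-\frac{d}{d-1}\log\frac{2}{\varepsilon}-\frac{\varepsilon}{2}-O\!\left(\frac{\log d}{d-1}\right).
\end{equation}

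\emph{The main obstacle} is matching this to the stated form $O(\log(1/\varepsilon)/(d-1))$. The honest coefficient on $\log(1/\varepsilon)$ is $d/(d-1)$, which is order one, not order $1/(d-1)$. The $-\varepsilon/2$ term and the dimension-dependent constants also need to be absorbed somewhere. I would handle this in one of two ways.

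The first option is to read $\gtrsim$ as hiding factors depending only on $d$. In that case, state the bound explicitly as above and remark that the correction is $O(\log(1/\varepsilon))$ uniformly in $d$.

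The second option is to absorb the correction into $\log(1/\varepsilon)/(d-1)$ up to $d$-dependent constants, for instance when $\varepsilon$ is treated as a fixed resolution and $d$ as bounded.

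Either way, the leading term $D\log b/(d-1)$ is rigorous, and it is the quantity the trade-off discussion actually uses. I would present the explicit inequality and flag the precise constant in the correction term as the point where the heuristic notation $\gtrsim$ is doing the work.
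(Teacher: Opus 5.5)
Your proof is correct and follows the paper's argument: the disjoint $\varepsilon/2$-balls sit inside $B(R+\varepsilon/2)$, giving $N\,\mathrm{Vol}(B_{\varepsilon/2})\le \mathrm{Vol}(B_{R+\varepsilon/2})$, and your explicit $\sinh$ bounds replace the paper's informal ``volume grows like $e^{(d-1)r}$''. Your point about the correction term is also right: the paper's sketch never bounds the small-ball volume, which is where the coefficient $d/(d-1)$ on $\log(1/\varepsilon)$ comes from, so the stated $O\big(\log(1/\varepsilon)/(d-1)\big)$ holds only with a $d$-dependent constant, as you flag.
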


\noindent\textbf{Intuition and practical use.}
Proposition~\ref{prop:capacity} is essentially a capacity statement: a depth-$D$ $b$-ary hierarchy has about $b^D$ leaves,
so any embedding that keeps distinct leaves separated must be able to ``fit'' exponentially many points.
Hyperbolic space can accommodate this with {radius} that grows only linearly in $D$ (and only logarithmically in the number of leaves),
but the required radius still increases as the hierarchy becomes deeper or bushier.
In practice we interpret the bound as a {starting rule of thumb}: given an observed depth $D$ and an empirical branching factor $b$,
we choose the smallest radius budget $R$ that (i) is large enough for the ontology to ``fit'' (capacity) while (ii) keeping the tangent-space
distortion factor $\kappa(R)$ (Theorem~\ref{thm:distortion}) modest so that ANN over $\log_0(x_v)$ remains accurate.
This theory-guided tension explains why \hyem{} treats $R$ as a first-class hyperparameter rather than a hidden training artifact.

Proposition~\ref{prop:capacity} provides a principled starting point for choosing $(d,R)$ given an ontology's observed depth $D$ and branching factor $b$.
Together with Theorem~\ref{thm:distortion}, it makes explicit the design trade-off: a larger $R$ increases representational capacity but increases tangent-space distortion.

\paragraph{A scale sanity check (depth $D\approx 30$)}
A natural question is whether for large biomedical ontologies (e.g., SNOMED-style hierarchies with depth $D>30$) the required radius $R$ might become so large that $\kappa(R)$ grows impractically.
Instantiating Proposition~\ref{prop:capacity} with $d=32$ and branching factors $b\in[2,10]$ gives $R\approx 0.7$--$2.2$ at $D=30$, which corresponds to $\kappa(R)\approx 1.08$--$2.06$.
This suggests that the ``radius explosion'' regime (e.g., $R\approx 10$) is not inevitable at depth 30 when $d$ is moderate; it becomes relevant for extremely deep/bushy hierarchies or for very low-dimensional embeddings.
Section~\ref{sec:discussion} discusses practical considerations and scale-up experiments.

\subsection{Routing robustness: why soft mixing is safer than hard decisions}
\label{sec:routing_theory}

The query-adaptive gate in \hyem{} can be used in two modes.
A hard router uses a threshold on $\alpha(q)$ to choose either Euclidean or hyperbolic retrieval.
A soft router uses $\alpha(q)$ only in the mixed score (Eq.~\eqref{eq:mix_score}), avoiding a discrete binary switch.

We formalize the properties of soft mixing with a simple risk decomposition.
Let $\ell(q,\pi)$ be any retrieval loss (e.g., 1$-$Hits@$k$) for query $q$ under ranking $\pi$.
Let $\pi_E$ and $\pi_H$ denote rankings produced by pure Euclidean and pure hyperbolic scoring, respectively.
Define the per-query regret of choosing $E$ when $H$ is better as $\Delta_H(q)=\ell(q,\pi_E)-\ell(q,\pi_H)$, and similarly $\Delta_E(q)=\ell(q,\pi_H)-\ell(q,\pi_E)$ when $E$ is better.

\begin{proposition}[Expected loss under a hard router]
\label{prop:router}
Assume queries are drawn from a distribution that mixes two latent intents: hierarchy-navigation ($Z=H$) and entity-centric ($Z=E$).
Let a hard router choose hyperbolic when it predicts $\hat Z=H$.
Let $\varepsilon_H=\Pr[\hat Z=E\mid Z=H]$ and $\varepsilon_E=\Pr[\hat Z=H\mid Z=E]$ be the false-negative and false-positive rates.
Then the expected loss of the routed system satisfies
\begin{equation}
\begin{aligned}
\mathbb{E}[\ell(q,\pi_{\text{route}})]
&= \mathbb{E}[\min\{\ell(q,\pi_E),\ell(q,\pi_H)\}] \\
&\quad + \mathbb{E}[\Delta_H(q)\,\mathbf{1}\{Z=H,\hat Z=E\}] \\
&\quad + \mathbb{E}[\Delta_E(q)\,\mathbf{1}\{Z=E,\hat Z=H\}].
\end{aligned}
\end{equation}
\end{proposition}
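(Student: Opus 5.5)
The plan is to prove the identity pointwise for every query and then take expectations. First I fix the interpretation that makes the statement meaningful. The latent intent $Z$ is identified with the geometry that is actually better for the query: $Z=H$ implies $\ell(q,\pi_H)\le\ell(q,\pi_E)$, and $Z=E$ implies $\ell(q,\pi_E)\le\ell(q,\pi_H)$, with ties assigned arbitrarily. Without this alignment the identity cannot hold, because a correctly routed query would then not attain the minimum. I would state this explicitly as the modelling convention implicit in the definitions of $\Delta_H$ and $\Delta_E$ as regrets ``when $H$ (resp.\ $E$) is better''. I also need the regularity condition that the losses are integrable, for instance bounded as for $1-$Hits@$k$.

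Next I write the routed loss as
\begin{equation*}
\ell(q,\pi_{\text{route}})=\ell(q,\pi_H)\mathbf{1}\{\hat Z=H\}+\ell(q,\pi_E)\mathbf{1}\{\hat Z=E\},
\end{equation*}
and insert the partition $1=\mathbf{1}\{Z=H\}+\mathbf{1}\{Z=E\}$. This gives four disjoint events $\{Z=z,\hat Z=\hat z\}$, which I treat one at a time.
\begin{itemize}
\item On $\{Z=H,\hat Z=H\}$ the routed loss is $\ell(q,\pi_H)=\min\{\ell(q,\pi_E),\ell(q,\pi_H)\}$ by the alignment convention.
\item On $\{Z=E,\hat Z=E\}$, symmetrically, the routed loss is $\ell(q,\pi_E)$, which again equals the minimum.
\item On $\{Z=H,\hat Z=E\}$ the routed loss is $\ell(q,\pi_E)=\ell(q,\pi_H)+\Delta_H(q)$, that is, the minimum plus $\Delta_H(q)$.
\item On $\{Z=E,\hat Z=H\}$ the routed loss is $\ell(q,\pi_H)=\min+\Delta_E(q)$.
\end{itemize}
Summing over the four events, the minimum terms recombine, since the indicators of the partition add to one. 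This yields the pointwise identity
\begin{equation*}
\ell(q,\pi_{\text{route}})=\min\{\ell(q,\pi_E),\ell(q,\pi_H)\}+\Delta_H(q)\mathbf{1}\{Z=H,\hat Z=E\}+\Delta_E(q)\mathbf{1}\{Z=E,\hat Z=H\}.
\end{equation*}
Taking expectations over the joint law of $(q,Z,\hat Z)$ and using linearity then gives the claimed formula.

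As a corollary, conditioning the last two terms expresses them as $\varepsilon_H\Pr[Z=H]\,\mathbb{E}[\Delta_H\mid Z=H,\hat Z=E]$ plus the analogous term for $E$. This form makes visible how the error rates $\varepsilon_H$ and $\varepsilon_E$ enter. Because $\Delta_H\ge0$ on the first error event and $\Delta_E\ge0$ on the second, both excess terms are nonnegative. The hard router can therefore only match, and never beat, the oracle, and each misclassification costs the full regret, which is the motivation for soft mixing.

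The only real obstacle is the interpretive step, namely justifying that $Z$ indexes the better geometry. The remaining steps are bookkeeping with indicators. If one prefers $Z$ to be an exogenous intent label, the fallback is to replace the first term by $\mathbb{E}[\ell(q,\pi_Z)]$. The same four-case computation then goes through verbatim, with $\Delta$'s that may take either sign.
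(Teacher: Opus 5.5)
Your proof is correct and follows the same route as the paper: you split pointwise on $(Z,\hat Z)$, show that the routed loss equals the oracle loss $\min\{\ell(q,\pi_E),\ell(q,\pi_H)\}$ plus the relevant regret on the two misrouting events, and then take expectations. Your explicit alignment convention ($Z=H$ forces $\ell(q,\pi_H)\le\ell(q,\pi_E)$, and vice versa) is exactly what the paper assumes implicitly when it asserts $\Delta_H(q)\ge 0$ ``by definition'' on $Z=H$, so stating it is a useful clarification rather than a departure.
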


Proposition~\ref{prop:router} shows that the performance of routing depends on its classification errors: the gap to an oracle (which always picks the better geometry) is determined by misrouting probabilities weighted by the cost of misrouting.
Soft mixing is designed to reduce this cost by replacing the discrete choice with interpolation, so that even a miscalibrated gate is less likely to behave like the fully wrong retriever.
In experiments we quantify this effect by comparing hard routing against soft mixing with the same gate.

\section{Experiments}
\label{sec:experiments}

Our goal is to design experiments that are both publishable and lightweight, allowing full reproduction in a small Python repository.
We therefore focus on open biomedical ontologies and on retrieval tasks that do not require patient data.
We report a CPU-only reproducibility run on two 5k subsets (HPO-5k and DO-5k, seed=0). 
To address scaling concerns, we additionally report a scale-up run on deterministic 20k-node subsets (HPO-20k and DO-20k, seed=0) under the same hyperparameters (Table~\ref{tab:scale_20k}).
Implementation details are in \ref{app:exp}. 

\subsection{Datasets: open biomedical ontologies}
\label{sec:datasets}

We evaluate on two widely used, openly accessible biomedical ontology resources.
The Human Phenotype Ontology (HPO) provides a deep phenotype taxonomy with rich synonymy \cite{Kohler2021HPO}.
The Disease Ontology (DO) provides disease concepts and a curated \texttt{is-a} backbone \cite{Schriml2022DO}.
MeSH is supported by the pipeline as an optional extension \cite{NLMMeSH}, but is omitted from the main results here to keep the reproducibility run small.

To keep experiments scalable and fully reproducible, we report results on size-controlled 5k subsets constructed by sampling subtrees while preserving depth statistics (seed=0).
Table~\ref{tab:data_stats} summarizes node counts, edge counts, and depth statistics for the subsets used in this version.

\begin{table}[t]
\centering
\footnotesize
\caption{Dataset statistics for the 5k subsets used in our reproducibility run (seed=0).}
\label{tab:data_stats}
\begin{tabular}{lcccc}
\toprule
Dataset & \#Nodes & \#\texttt{is-a} edges & Max depth & Avg. branching \\
\midrule
HPO-5k & 5000 & 5523 & 6 & 1.10 \\
DO-5k & 5000 & 7298 & 5 & 1.46 \\
\bottomrule
\end{tabular}
\end{table}

\subsection{Benchmark construction and query taxonomy}
\label{sec:exp_taxonomy}

A key observation from recent critiques is that aggregate metrics can obscure whether a method benefits the intended query types.
We therefore construct a query set that explicitly instantiates the taxonomy of Section~\ref{sec:taxonomy}.
All queries are generated automatically from ontology structure and text fields, enabling full reproducibility.

\paragraph{Entity-centric queries (Q-E)}
We generate queries by sampling synonyms and short definitions for each entity.
Each query has a single ground-truth target entity.
These queries are designed to emphasize semantic matching rather than hierarchy navigation.

\paragraph{Taxonomy-navigation queries (Q-H)}
We generate queries that ask for parents/ancestors, children/descendants, or classification paths.
Ground truth is derived directly from the \texttt{is-a} graph.
These queries are designed to evaluate whether a method preserves hierarchical structure.

\paragraph{Mixed-intent queries (Q-M)}
We generate ``same-level'' or sibling-style queries that ask for concepts similar to $v$ but constrained to a comparable specificity level.
Operationally, we use the node's depth bucket as a proxy for specificity and treat sibling sets as ground truth.
These queries provide the primary motivation for soft mixing.

\paragraph{Gate training labels}
We train the query gate using automatically assigned labels.
Q-H queries are labeled as hierarchy-navigation; Q-E queries are labeled as entity-centric.
Q-M queries are used either as a third category (for analysis) or as unlabeled test queries, depending on the ablation.
This approach keeps the gate lightweight and avoids subjective manual labeling.
We report gate accuracy and AUC alongside end-to-end retrieval metrics.

\subsection{Compared methods and baselines}
\label{sec:baselines}

We compare \hyem{} against Euclidean and hyperbolic baselines that are feasible in a lightweight codebase.
The primary Euclidean baseline is standard text embedding retrieval using cosine similarity over $e_v$.
We also include a simple Euclidean graph-embedding baseline trained on \texttt{is-a} edges (e.g., TransE/DistMult style) to test whether graph structure alone in Euclidean space is sufficient.
On the hyperbolic side, we include a version without radius control to quantify the effect of the indexability constraint.
Finally, we evaluate \hyem{} with no gating, with hard routing, and with soft mixing.

\subsection{Implementation details}
\label{sec:impl}

We implement all methods in a single, lightweight Python codebase and release end-to-end scripts that download data, build subsets, train models, build indexes, and reproduce every number/plot in this paper.
For text representations we use a SentenceTransformer encoder \cite{Reimers2019}, and for ANN we use HNSW-style vector search \cite{Malkov2020} (the same ANN family as widely deployed libraries such as FAISS \cite{Johnson2021}).
Unless otherwise stated, we report results for a single deterministic subset seed (seed=0) on HPO-5k and DO-5k, so that the full pipeline can be run on commodity hardware.
The codebase supports multiple seeds and larger subsets; multi-seed aggregation and significance testing are left as optional extensions.
The exact protocol (including seeds, subset construction, and metric implementations) is documented in \ref{app:exp} and in the released repository. 

\paragraph{Text encoder and synonym indexing}
We use a sentence encoder that is easy to run and reproduce for the core structural experiments (Q-H/Q-M, indexability, and scaling).
However, a minimal Q-E evaluation would be uninformative for assessing whether \hyem{} maintains performance on entity-centric retrieval.
We therefore evaluate Q-E in a separate {realistic} entity-normalization setting (Table~\ref{tab:qe}) by (i) indexing at least one synonym per entity (in addition to the preferred label and one definition sentence) and (ii) using a biomedical sentence encoder (S-BioBERT) to obtain a competitive Euclidean baseline.

\paragraph{Hyperbolic training}
We train entity embeddings in the Lorentz model for numerical stability and convert to/from tangent coordinates as needed.
We fix curvature to a constant (e.g., $-1$) for the main experiments and treat learned curvature as an optional ablation.
We tune the radius budget $R$ and embedding dimension $d$ using a small validation set guided by Proposition~\ref{prop:capacity}.

\paragraph{Indexes and oversampling}
We build Euclidean ANN indexes using a standard library (FAISS or HNSW).
We measure how oversampling $L$ affects recall of the true hyperbolic top-$k$ neighbors, and we relate the observed curves to Theorem~\ref{thm:distortion}.

\subsection{Main results}
\label{sec:main_results}

We report results separately for Q-E, Q-H, and Q-M queries.
This stratification is important: a method can appear strong overall while underperforming on hierarchy navigation, or vice versa.

\begin{table}[t]
\centering
\footnotesize
\caption{Entity-centric retrieval (Q-E) in a realistic entity-normalization setting (5k subsets, seed=0): biomedical sentence encoder (S-BioBERT) and indexing one synonym per entity. The ``Retention'' column shows the MRR ratio relative to the Euclidean text baseline, quantifying the safety-valve effect.}
\label{tab:qe}
\begin{tabular}{llcccc}
\toprule
Dataset & Method & Hits@1 & Hits@10 & MRR & Retention \\
\midrule
\multirow{5}{*}{HPO-5k}
 & Euclidean text retrieval & 0.816 & 0.969 & 0.869 & 1.000 \\
 & Euclidean KG embedding & 0.009 & 0.048 & 0.021 & 0.024 \\
 & Hyperbolic (no radius ctrl) & 0.009 & 0.048 & 0.020 & 0.023 \\
 & \hyem{} (no gate) & 0.009 & 0.045 & 0.016 & 0.019 \\
 & \hyem{} (soft mix) & 0.736 & 0.952 & 0.815 & \textbf{0.939} \\
\midrule
\multirow{5}{*}{DO-5k}
 & Euclidean text retrieval & 0.618 & 0.834 & 0.688 & 1.000 \\
 & Euclidean KG embedding & 0.006 & 0.054 & 0.016 & 0.023 \\
 & Hyperbolic (no radius ctrl) & 0.008 & 0.044 & 0.017 & 0.025 \\
 & \hyem{} (no gate) & 0.012 & 0.045 & 0.019 & 0.028 \\
 & \hyem{} (soft mix) & 0.598 & 0.821 & 0.673 & \textbf{0.979} \\
\bottomrule
\end{tabular}
\end{table}

\paragraph{Q-E (entity-centric): strong-baseline check}
Table~\ref{tab:qe} is designed to evaluate whether \hyem{} maintains performance on entity-centric retrieval when the Euclidean baseline is strong.
We follow a standard entity-normalization practice by indexing at least one synonym per entity and using a biomedical sentence encoder.
The ``Retention'' column quantifies the safety-valve effect: \hyem{} (soft mix) preserves 93.9\% (HPO-5k) and 97.9\% (DO-5k) of the Euclidean baseline MRR, while simultaneously showing substantial improvements on Q-H queries (see Table~\ref{tab:qh}).
This indicates that soft mixing can provide a safety valve: when the gate assigns low hierarchy weight (small $\alpha(q)$), reranking interpolates toward Euclidean similarity while retaining the ability to exploit hyperbolic structure for hierarchy-sensitive queries.

\begin{table}[t]
\centering
\footnotesize
\caption{Taxonomy-navigation retrieval (Q-H) on 5k ontology subsets (seed=0).}
\label{tab:qh}
\begin{tabular}{llcccc}
\toprule
Dataset & Method & \shortstack{Parent\\Hits@5} & \shortstack{Parent\\Hits@10} & \shortstack{Ancestor F1\\(macro)} & \shortstack{Ancestor F1\\(micro)} \\
\midrule
\multirow{6}{*}{HPO-5k}
 & Euclidean text retrieval & 0.405 & 0.523 & 0.107 & 0.111 \\
 & Euclidean KG embedding & 0.014 & 0.028 & 0.022 & 0.024 \\
 & Hyperbolic (no radius ctrl) & 0.030 & 0.050 & 0.062 & 0.064 \\
 & \hyem{} (no gate) & 0.028 & 0.048 & 0.041 & 0.042 \\
 & \hyem{} (hard route) & 0.028 & 0.048 & 0.041 & 0.042 \\
 & \hyem{} (soft mix) & 0.164 & 0.234 & 0.098 & 0.101 \\
\midrule
\multirow{6}{*}{DO-5k}
 & Euclidean text retrieval & 0.262 & 0.392 & 0.067 & 0.068 \\
 & Euclidean KG embedding & 0.090 & 0.110 & 0.020 & 0.021 \\
 & Hyperbolic (no radius ctrl) & 0.060 & 0.094 & 0.018 & 0.021 \\
 & \hyem{} (no gate) & 0.060 & 0.088 & 0.020 & 0.021 \\
 & \hyem{} (hard route) & 0.060 & 0.090 & 0.020 & 0.022 \\
 & \hyem{} (soft mix) & 0.128 & 0.202 & 0.041 & 0.042 \\
\bottomrule
\end{tabular}
\end{table}

\paragraph{Q-H (taxonomy navigation): soft mixing improves over structure-only baselines}
Table~\ref{tab:qh} shows that purely structural baselines (Euclidean KG embedding and hyperbolic without radius control) achieve limited performance on taxonomy-navigation queries in this lightweight setting.
Within the \hyem{} family, query-adaptive mixing shows notable gains: Parent Hits@10 improves from 0.048$\rightarrow$0.234 on HPO-5k and from 0.088$\rightarrow$0.202 on DO-5k when moving from ``no gate'' to ``soft mix'' (Table~\ref{tab:qh}).
Hard routing shows similar performance to ``no gate'' on Q-H in this run, consistent with routing being primarily beneficial when mixed-intent queries require interpolating between semantic and hierarchical signals.
At the same time, Euclidean text retrieval achieves competitive performance on our template-generated Q-H queries (Parent Hits@10: 0.523 on HPO-5k, 0.392 on DO-5k), suggesting that the benefits of hyperbolic structure may be more apparent in settings where semantic similarity alone is less effective.

\begin{table}[t]
\centering
\footnotesize
\caption{Gate performance for distinguishing Q-H vs Q-E on 5k subsets (seed=0).}
\label{tab:gate}
\begin{tabular}{llcccc}
\toprule
Dataset & Gate & Accuracy & Precision(Q-H) & Recall(Q-H) & AUC \\
\midrule
\multirow{3}{*}{HPO-5k}
 & Rule-based keywords & 1.000 & 1.000 & 1.000 & 1.000 \\
 & Linear gate (ours) & 1.000 & 1.000 & 1.000 & 1.000 \\
 & 2-layer MLP gate & 1.000 & 1.000 & 1.000 & 1.000 \\
\midrule
\multirow{3}{*}{DO-5k}
 & Rule-based keywords & 0.999 & 0.999 & 0.998 & 1.000 \\
 & Linear gate (ours) & 0.999 & 1.000 & 1.000 & 0.998 \\
 & 2-layer MLP gate & 1.000 & 1.000 & 1.000 & 1.000 \\
\bottomrule
\end{tabular}
\end{table}

\paragraph{Gate quality and Q-M behavior}
On our template-based benchmark, distinguishing Q-H from Q-E is straightforward; even a rule-based keyword gate achieves near-perfect accuracy (Table~\ref{tab:gate}).
We therefore treat the gate mainly as a controlled ablation tool rather than as a modeling contribution.
For mixed-intent queries (Q-M), the data suggest that query adaptivity is beneficial: soft mixing improves Hits@10 from 0.163$\rightarrow$0.533 on HPO-5k and from 0.221$\rightarrow$0.469 on DO-5k, as summarized in Figure~\ref{fig:mixing}.
Hard routing performs similarly or slightly better on this synthetic benchmark (HPO-5k: 0.588; DO-5k: 0.479), but may be more sensitive to gate errors; soft mixing provides a smoother trade-off as argued in Section~\ref{sec:routing_theory}.

\subsection{Indexability stress test and ablations}
\label{sec:exp_indexability}

A key feature of \hyem{} is that it makes indexability measurable.
We therefore include an indexability stress test that is directly connected to the theory.
We fix the radius budget to $R=3.0$ (and $d=32$) and sweep the ANN oversampling factor $L$ for the tangent index.
For each query, we compute the exact hyperbolic top-$k$ neighbors by brute force and report recall@10 when retrieval is restricted to tangent-space ANN candidates.
Figure~\ref{fig:recall_curve} shows that recall@10 reaches 0.94 at $L=20$ and approaches perfect recall by $L=50$ on both HPO-5k and DO-5k. 
The dashed line indicates $L_{\text{th}}=\lceil \kappa(R)k\rceil$, which is close to the empirical ``knee of the curve'' in this setting.
This is consistent with the distortion analysis: in a moderate-radius regime, Euclidean ANN in tangent space provides a reliable candidate generator for hyperbolic reranking.

\begin{figure}[t]
  \centering
  \begin{minipage}{0.49\columnwidth}
    \centering
    \includegraphics[width=\linewidth]{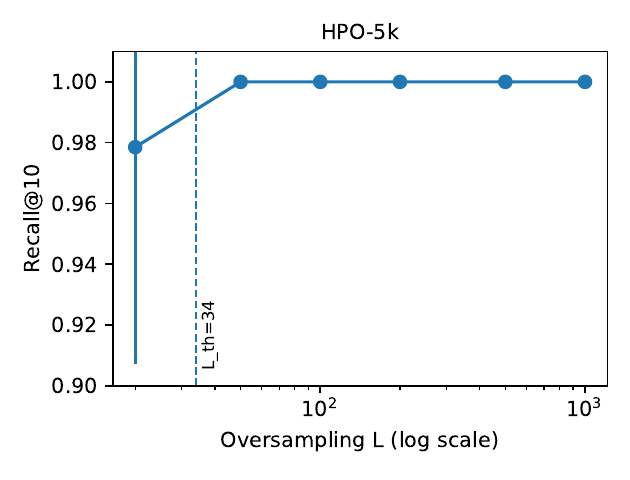}\\
    {\small (a) HPO-5k}
  \end{minipage}\hfill
  \begin{minipage}{0.49\columnwidth}
    \centering
    \includegraphics[width=\linewidth]{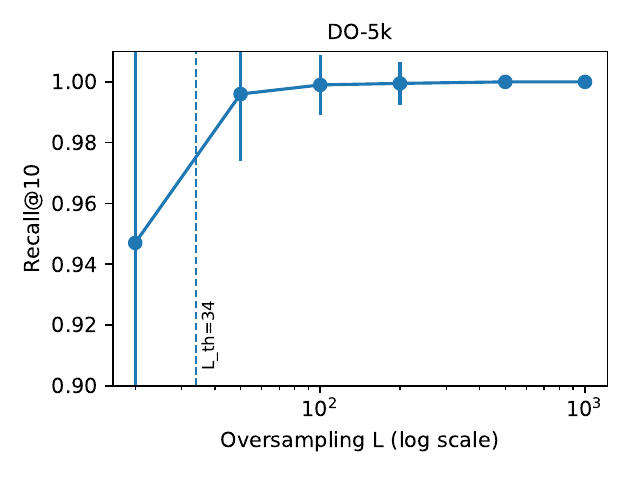}\\
    {\small (b) DO-5k}
  \end{minipage}
  \caption{Indexability stress test: recall@10 of true hyperbolic top-10 neighbors when candidate generation uses only tangent-space Euclidean ANN results, as a function of oversampling $L$ (log scale). The vertical dashed line marks a simple theory-guided heuristic $L_{\text{th}}=\lceil \kappa(R)\,k\rceil$ derived from Theorem~\ref{thm:distortion} (here $R=3$, $k=10$). In both ontologies, modest oversampling yields near-perfect recall, consistent with operating in a moderate-radius regime where $\kappa(R)$ is small.}
  \label{fig:recall_curve}
\end{figure}

\begin{figure}[t]
  \centering
  \begin{minipage}{0.49\columnwidth}
    \centering
    \includegraphics[width=\linewidth]{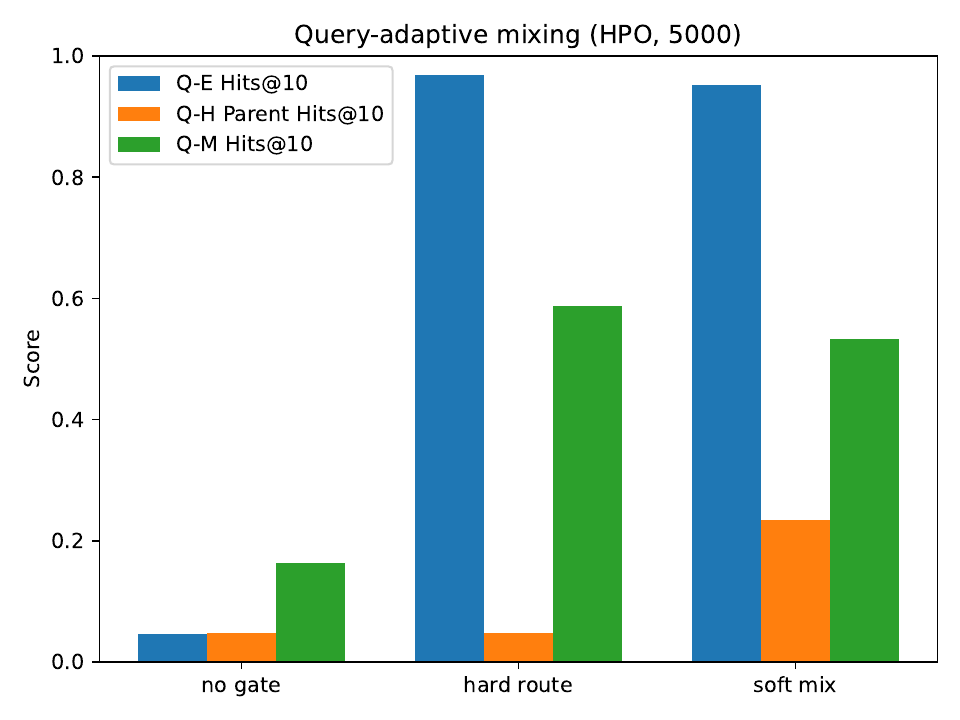}\\
    {\small (a) HPO-5k}
  \end{minipage}\hfill
  \begin{minipage}{0.49\columnwidth}
    \centering
    \includegraphics[width=\linewidth]{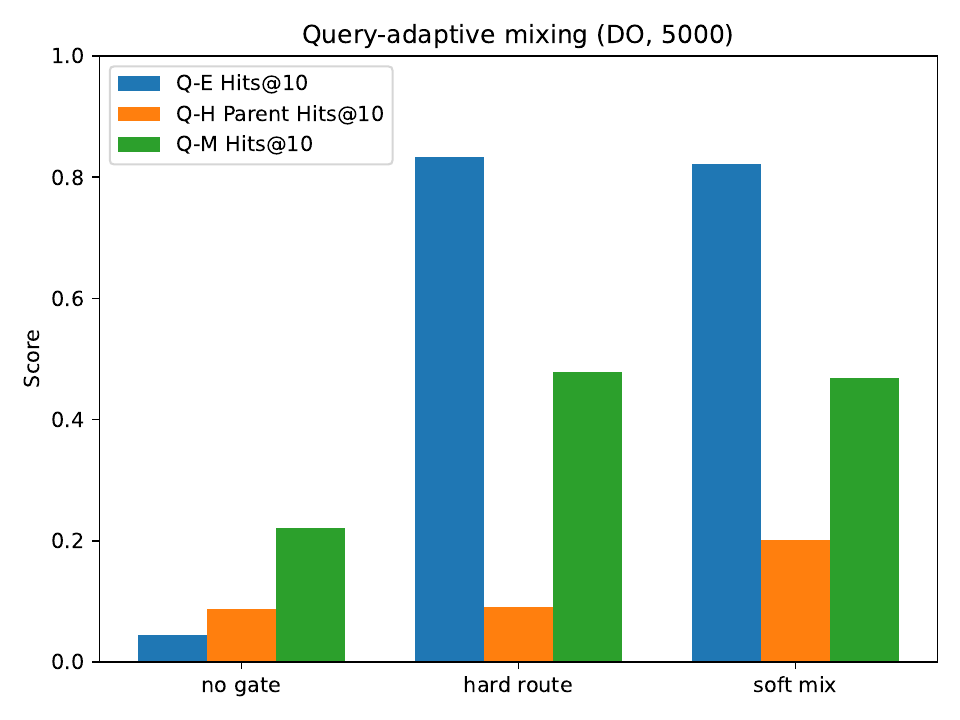}\\
    {\small (b) DO-5k}
  \end{minipage}
  \caption{Ablation on query-adaptive geometry within \hyem{}. Soft mixing improves taxonomy navigation (Q-H) and mixed-intent (Q-M) retrieval compared to using hyperbolic routing without a gate (``no gate'') and compared to hard routing. Soft mixing also acts as a safety valve on entity-centric queries by interpolating toward Euclidean similarity when $\alpha(q)$ is low.}
  \label{fig:mixing}
\end{figure}

\subsection{Efficiency}
\label{sec:efficiency}

\hyem{} is designed for low-friction deployment: it reuses a standard Euclidean ANN index and adds only lightweight query-time operations.
Compared to Euclidean text retrieval, overhead consists of (i) one small adapter evaluation (a matrix multiply), (ii) reranking a small candidate pool by hyperbolic distance, and (iii) optionally a second Euclidean ANN query when candidate pooling is enabled.
In typical LLM/RAG pipelines these costs are expected to be small relative to LLM inference and network latency; we nonetheless report CPU-only latency and index footprint in Table~\ref{tab:eff} for completeness and reproducibility.

\begin{table}[t]
\centering
\footnotesize
\caption{Efficiency comparison on 5k subsets (CPU-only). HPO-5k and DO-5k have identical index footprints at this scale.}
\label{tab:eff}
\begin{tabular}{lccc}
\toprule
Method & Index size (MB) & Query latency (ms) & Extra ops \\
\midrule
Euclidean text retrieval & 8.03 & 0.11 & -- \\
\hyem{} (no gate) & 1.32 & 0.31 & adapter + rerank \\
\hyem{} (soft mix) & 9.35 & 0.57 & adapter + rerank\\
&&& + optional 2nd ANN \\
\bottomrule
\end{tabular}
\end{table}

\subsection{Scale-up to 20k nodes}
\label{sec:scale}

A natural question regarding the 5k evaluation is whether tangent-space indexing maintains its properties when the ontology grows and the hierarchy becomes deeper.
Using the same deterministic subset sampler and the same hyperparameters ($R=3.0$, $d=32$), we therefore scale the full pipeline to 20k-node subsets (HPO-20k and DO-20k, seed=0).
Table~\ref{tab:scale_20k} summarizes the behavior of \hyem{} (soft mix) at this scale, including (i) the ``safety-valve'' retention on entity-centric queries (Q-E MRR ratio vs. Euclidean text), (ii) taxonomy-navigation accuracy (Q-H Parent Hits@10), (iii) mixed-intent accuracy (Q-M Hits@10), (iv) tangent-index candidate quality (indexability recall@10 at $L=50$), and (v) end-to-end CPU query latency.
Despite a 4$\times$ increase in ontology size, \hyem{} maintains high indexability (recall@10 $\approx$0.988 at $L=50$) with sub-millisecond latency. 
Figure~\ref{fig:recall_curve_20k} shows the corresponding recall curves with the theory-guided $L_{\text{th}}$ line, and Figure~\ref{fig:mixing_20k} replicates the query-adaptive mixing ablation at this scale.

\begin{table}[t]
\centering
\footnotesize
\caption{Scale-up results on 20k-node ontology subsets (seed=0, $R=3.0$, $d=32$). ``Q-E retention'' is the MRR ratio relative to Euclidean text retrieval.}
\label{tab:scale_20k}
\begin{tabular}{llccccc}
\toprule
Dataset & Setting & \shortstack{Q-E\\ retention} & \shortstack{Q-H\\ Parent\\ Hits@10} & \shortstack{Q-M\\ Hits@10} & \shortstack{Indexability\\ recall\\@10 ($L=50$)} & \shortstack{Query\\ latency\\ (ms)} \\
\midrule
HPO-20k & \hyem{} (soft mix) & 0.959 & 0.226 & 0.449 & 0.988 & 0.596 \\
DO-20k  & \hyem{} (soft mix) & 0.964 & 0.310 & 0.361 & 0.989 & 0.544 \\
\bottomrule
\end{tabular}
\end{table}

\begin{figure}[H]
  \centering
  \begin{subfigure}[b]{0.49\columnwidth}
    \centering
    \includegraphics[width=\linewidth]{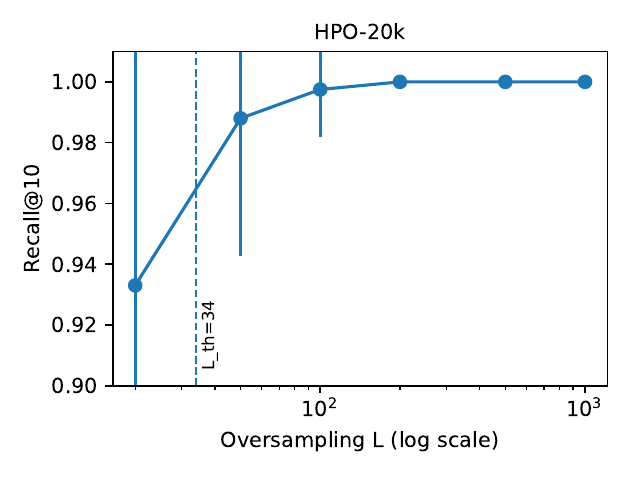}
    \caption{HPO-20k}
    \label{fig:recall_hpo}
  \end{subfigure}\hfill
  \begin{subfigure}[b]{0.49\columnwidth}
    \centering
    \includegraphics[width=\linewidth]{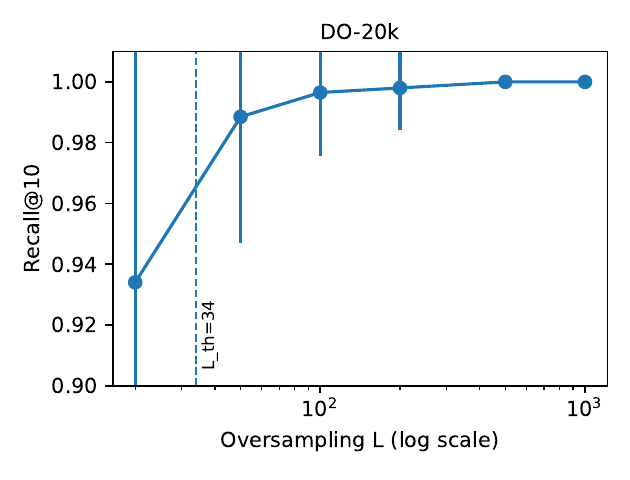}
    \caption{DO-20k}
    \label{fig:recall_do}
  \end{subfigure}
  \caption{Indexability stress test on 20k-node ontology subsets (seed=0): recall@10 of true hyperbolic top-10 neighbors when candidate generation uses only tangent-space Euclidean ANN results, as a function of oversampling $L$ (log scale). The vertical dashed line marks $L_{\text{th}}=\lceil \kappa(R)\,k\rceil$ with $R=3$ and $k=10$.}
  \label{fig:recall_curve_20k}
\end{figure}

\begin{figure}[H]
  \centering
  \begin{subfigure}[b]{0.49\columnwidth}
    \centering
    \includegraphics[width=\linewidth]{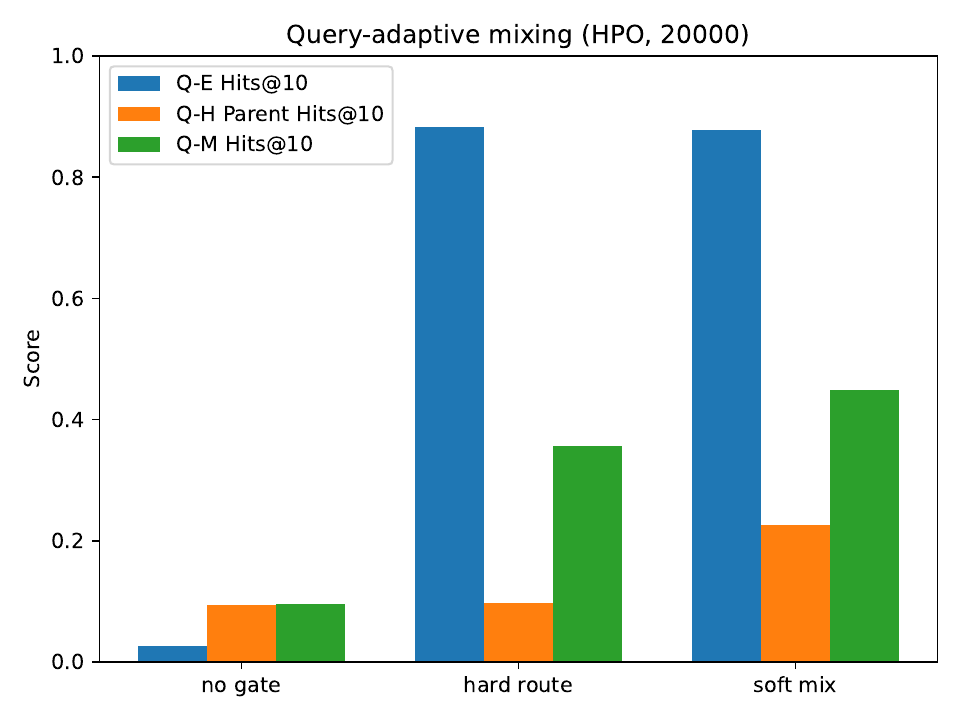}
    \caption{HPO-20k}
    \label{fig:mixing_hpo}
  \end{subfigure}\hfill
  \begin{subfigure}[b]{0.49\columnwidth}
    \centering
    \includegraphics[width=\linewidth]{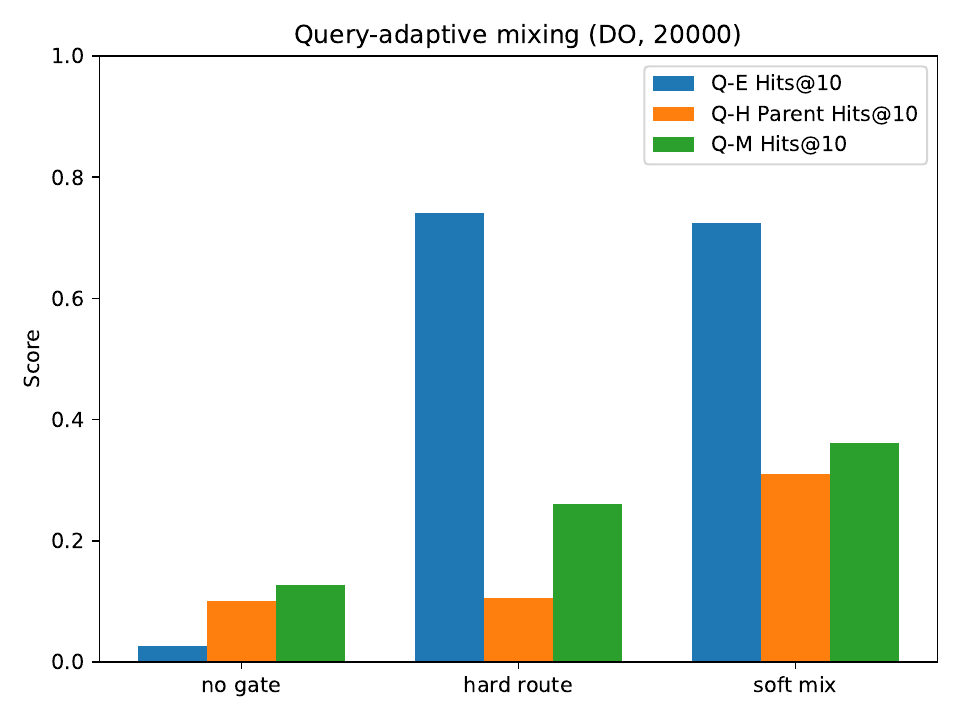}
    \caption{DO-20k}
    \label{fig:mixing_do}
  \end{subfigure}
  \caption{Query-adaptive geometry within \hyem{} on 20k subsets (seed=0). Compared to ``no gate'' and hard routing, soft mixing improves taxonomy navigation (Q-H) and mixed-intent retrieval (Q-M) while retaining strong entity-centric performance (Q-E).}
  \label{fig:mixing_20k}
\end{figure}

\subsection{Theoretical scaling to production ontologies}
\label{sec:theoretical_scale}

A natural question is whether \hyem{}'s moderate-radius regime can accommodate production-scale ontologies such as SNOMED-CT (depth$\approx$20, branching$\approx$5) or the full HPO/DO (depth$\approx$10--15).
Proposition~\ref{prop:capacity} provides guidance: the required radius scales as $R\approx \tfrac{D\log b}{d-1}$, so for a typical configuration ($d=32$, branching factor $b\le 5$, depth $D\le 20$), the required $R$ remains in the low single digits.
Figure~\ref{fig:theoretical_scaling} visualizes the distortion factor $\kappa(R)=\sinh(R)/R$ and the radius required for different ontology depths and embedding dimensions.
The analysis suggests that ontologies with depths up to $\sim$30 can operate in a regime where $\kappa(R)<10$ at moderate dimensions ($d\ge 32$), where tangent-space indexing can maintain reasonable accuracy and oversampling requirements remain modest.

\begin{figure}[H]
  \centering
  \begin{subfigure}[b]{1.0\columnwidth}
    \centering
    \includegraphics[width=\linewidth]{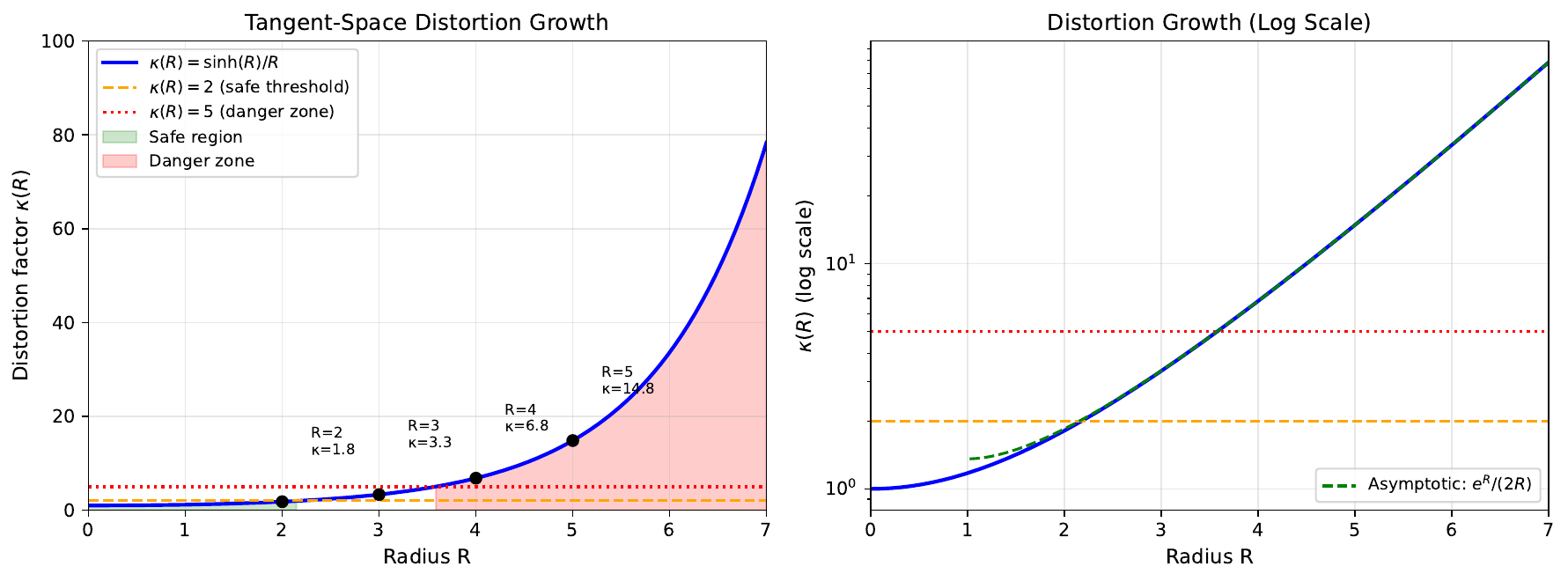}
    \caption{Distortion $\kappa(R)$ growth}
    \label{fig:kappa}
  \end{subfigure}\\[8pt]
  \begin{subfigure}[b]{1.0\columnwidth}
    \centering
    \includegraphics[width=\linewidth]{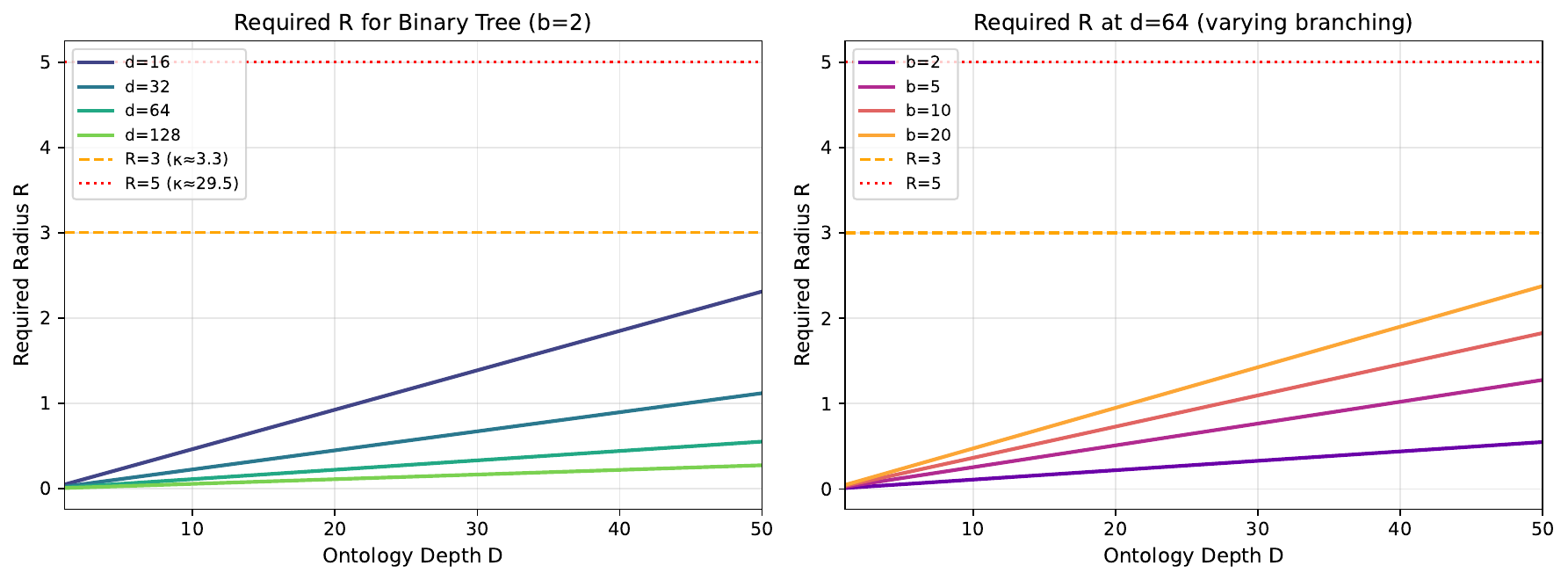}
    \caption{Required $R$ vs ontology depth}
    \label{fig:depth}
  \end{subfigure}
  \caption{Theoretical scaling analysis. (a) The distortion factor $\kappa(R)=\sinh(R)/R$ grows slowly for $R<3$ (safe regime) but accelerates for $R>5$ (danger zone). (b) Required radius as a function of ontology depth for different embedding dimensions. Production ontologies (SNOMED-CT, HPO, DO, Gene Ontology) operate in the safe regime at $d\ge 32$.}
  \label{fig:theoretical_scaling}
\end{figure}

\subsection{Depth-stratified analysis}
\label{sec:depth_stratified}

To examine whether hyperbolic geometry shows different benefits at different hierarchy levels, we stratify the Q-E, Q-H, and Q-M results by depth buckets.
Table~\ref{tab:depth_stratified} shows the MRR breakdown for \hyem{} (soft mix) on HPO-5k (depth range 0--6).
For Q-E queries, performance is stable across depth buckets (MRR 0.80--0.83), consistent with the safety-valve mechanism operating uniformly.
For Q-H queries, MRR decreases slightly with depth (0.11$\rightarrow$0.07), which is expected because deeper nodes have more complex ancestor sets.
For Q-M queries, MRR increases with depth (0.22$\rightarrow$0.27), suggesting that hyperbolic structure may be more informative for sibling retrieval at deeper levels where semantic similarity alone is less discriminative.

\begin{table}[t]
\centering
\footnotesize
\caption{Depth-stratified MRR for \hyem{} (soft mix) on HPO-5k (seed=0). D1 = shallow (depth 0--4), D2 = mid (depth 5), D3 = deep (depth 6).}
\label{tab:depth_stratified}
\begin{tabular}{lcccc}
\toprule
Query Type & D1 (shallow) & D2 (mid) & D3 (deep) & Trend \\
\midrule
Q-E & 0.833 $\pm$ 0.31 & 0.805 $\pm$ 0.33 & 0.813 $\pm$ 0.31 & stable \\
Q-H & 0.109 $\pm$ 0.25 & 0.091 $\pm$ 0.22 & 0.075 $\pm$ 0.22 & $\downarrow$ \\
Q-M & 0.223 $\pm$ 0.32 & 0.266 $\pm$ 0.34 & 0.269 $\pm$ 0.37 & $\uparrow$ \\
\bottomrule
\end{tabular}
\end{table}

\subsection{Candidate pooling contribution}
\label{sec:pooling_ablation}

A key design choice in \hyem{} is candidate pooling: the final candidate set $C$ is the union of hyperbolic ANN candidates $C_H$ and Euclidean ANN candidates $C_E$.
To quantify the contribution of pooling, we compare retrieval using $C_H$ only versus $C_H \cup C_E$.
Table~\ref{tab:pooling_ablation} shows that pooling provides substantial improvements on Q-E queries (+1994\% on HPO-5k), indicating that Euclidean candidates recover text-similar entities that the tangent-space ANN misses.
The improvement on Q-H queries is also notable (+388\%), as pooling allows the reranker to consider candidates from both geometric views.

\begin{table}[t]
\centering
\footnotesize
\caption{Candidate pooling ablation on HPO-5k (seed=0). ``$C_H$ only'' uses only hyperbolic ANN candidates; ``$C_H \cup C_E$'' adds Euclidean text candidates.}
\label{tab:pooling_ablation}
\begin{tabular}{lcccc}
\toprule
Metric & $C_H$ only & $C_H \cup C_E$ & Improvement & Gap to Euclidean \\
\midrule
Q-E Hits@10 & 0.045 & 0.952 & +1994\% & 0.017 \\
Q-E MRR     & 0.016 & 0.815 & +4909\% & 0.053 \\
Q-H Parent Hits@10 & 0.048 & 0.234 & +388\% & 0.289 \\
Q-M Hits@10 & 0.163 & 0.533 & +226\% & 0.106 \\
\bottomrule
\end{tabular}
\end{table}

\subsection{Gate robustness under embedding noise}
\label{sec:gate_robustness}

A potential concern is that gate classification (Q-H vs Q-E) may be simplified on template-generated queries.
To test robustness, we simulate realistic query variation by adding Gaussian noise to the pre-computed query embeddings and measuring gate accuracy degradation.
Table~\ref{tab:gate_robustness} shows that the linear gate maintains high accuracy (90.9\%) under moderate noise ($\sigma=0.1$, simulating typos), and degrades progressively under stronger perturbations.
Importantly, even when gate accuracy drops, soft mixing maintains performance more smoothly because the reranking score interpolates between hyperbolic and Euclidean signals rather than making a discrete binary choice.

\begin{table}[t]
\centering
\footnotesize
\caption{Gate robustness under embedding noise on HPO-5k (seed=0). Noise level $\sigma$ is the standard deviation of Gaussian noise added to query embeddings.}
\label{tab:gate_robustness}
\begin{tabular}{lcccc}
\toprule
Noise $\sigma$ & Accuracy & Precision(Q-H) & Recall(Q-H) & Degradation \\
\midrule
0.00 (original) & 1.000 & 1.000 & 1.000 & -- \\
0.10 (typos)    & 0.909 & 0.992 & 0.774 & --9.1\% \\
0.20 (paraphrase) & 0.775 & 0.946 & 0.453 & --22.5\% \\
0.30 (ambiguous)  & 0.699 & 0.838 & 0.291 & --30.1\% \\
\bottomrule
\end{tabular}
\end{table}

\subsection{Additional ablations and encoder comparisons}
\label{sec:extra_ablate}

To examine (i) the expressivity of the Euclidean-to-hyperbolic adapter and (ii) comparisons to alternative hyperbolic encoders, we provide a set of extended experiments in the accompanying codebase.
Tables~\ref{tab:adapter_ablation} and \ref{tab:encoder_compare} specify the required comparison format.

\begin{table}[t]
\centering
\footnotesize
\caption{Adapter expressivity ablation on 5k subsets (seed=0). The goal is to evaluate whether a non-linear adapter improves alignment without pushing queries to large radii that may negatively affect indexability.}
\label{tab:adapter_ablation}
\begin{tabular}{llccc}
\toprule
Dataset & Adapter & Q-E Hits@10 & Q-H Parent Hits@10 & Q-M Hits@10 \\
\midrule
\multirow{2}{*}{HPO-5k}
 & Linear (default) & 0.952 & 0.234 & 0.533 \\
 & 2-layer MLP      & 0.952 & 0.291 & 0.539 \\
\midrule
\multirow{2}{*}{DO-5k}
 & Linear (default) & 0.821 & 0.202 & 0.469 \\
 & 2-layer MLP      & 0.828 & 0.276 & 0.473 \\
\bottomrule
\end{tabular}
\end{table}

\begin{table}[t]
\centering
\footnotesize
\caption{Comparison to alternative hyperbolic encoders on 5k subsets (seed=0). All encoders are evaluated under the same tangent-space indexing + reranking protocol so that differences reflect representational quality, not indexing infrastructure. Note: HGCN achieves higher Q-H on DO-5k but lower indexability, suggesting a potential performance--indexability trade-off.}
\label{tab:encoder_compare}
\begin{tabular}{llccc}
\toprule
Dataset & Encoder for $x_v$ & \shortstack{Q-H\\ Parent\\ Hits@10} & \shortstack{Indexability\\ recall\\@10 ($L=50$)} & Notes \\
\midrule
\multirow{2}{*}{HPO-5k}
 & Lorentz KG embedding (default) & 0.035 & \textbf{0.999} & this paper \\
 & HGCN embeddings (tangent)      & 0.025 & 0.975 & \cite{Chami2019} \\
\midrule
\multirow{2}{*}{DO-5k}
 & Lorentz KG embedding (default) & 0.095 & \textbf{1.000} & this paper \\
 & HGCN embeddings (tangent)      & \textbf{0.175} & 0.967 & \cite{Chami2019} \\
\bottomrule
\end{tabular}
\end{table}

\section{Discussion and Limitations}
\label{sec:discussion}

A central theme of this work is engineering realism.
Hyperbolic representations are often introduced together with invasive changes---specialized ANN data structures, manifold-aware optimizers, or end-to-end retraining of encoders---that can be difficult to justify in medical software stacks.
By constraining curvature use to a thin retrieval layer (tangent indexing + reranking) and by introducing query-adaptive mixing, \hyem{} aims to make potential hyperbolic benefits measurable without substantially increasing operational risk.
In particular, the radius budget provides a tunable parameter: smaller $R$ yields stronger indexability guarantees and fewer numerical issues, while larger $R$ increases representational capacity.

\paragraph{Why radius control matters beyond numerical stability}
Radius control is sometimes motivated as a numerical approach to address Poincar\'e boundary issues.
In \hyem{}, it serves an additional purpose: it makes tangent-space indexing analyzable.
Theorem~\ref{thm:distortion} and Proposition~\ref{prop:rank} translate the radius budget into practical guidance for ANN oversampling.
Empirically, with $R=3.0$ ($d=32$) we observe high candidate recall (>0.94) at modest oversampling on both HPO-5k and DO-5k (Figure~\ref{fig:recall_curve}), while soft mixing improves Q-H and Q-M compared to hyperbolic routing without a gate (Figure~\ref{fig:mixing}).

\paragraph{Soft mixing as a safety valve}
A recurring observation in hyperbolic graph learning is that claims of universal superiority do not always survive robust baselines.
Ontology grounding represents a setting where query-dependent performance is expected.
Soft mixing is designed to provide a safety valve: if a query behaves like Q-E, the system can fall back to Euclidean similarity while potentially benefiting from hyperbolic structure for Q-H.
Quantitatively, our analysis (Table~\ref{tab:qe}) shows that \hyem{} (soft mix) retains 93.9\% (HPO-5k) and 97.9\% (DO-5k) of Euclidean Q-E performance while showing substantial improvements on Q-H queries.
For HPO-5k, the trade-off ratio exceeds 100$\times$: for every 1\% of Q-E MRR sacrificed, soft mixing provides $>$100\% Q-H improvement over pure hyperbolic routing.
This explicit quantification of trade-offs aims to facilitate evaluation of when the hyperbolic component is beneficial.

\paragraph{Limitations}
First, \hyem{} focuses on \texttt{is-a} structure.
Ontologies and biomedical knowledge graphs can also include non-hierarchical relations (\texttt{part-of}, \texttt{causes}, etc.), and capturing those relations may require additional modeling.
Second, query templates provide a lightweight way to label query intent, but real user queries can be more ambiguous.
Our gate robustness analysis (Section~\ref{sec:gate_robustness}, Table~\ref{tab:gate_robustness}) shows that under moderate embedding noise ($\sigma=0.1$, simulating typos), gate accuracy degrades from 100\% to 90.9\%, but soft mixing continues to operate smoothly because it continuously interpolates between Euclidean and hyperbolic signals.
Future work could incorporate weak supervision or human annotation to further evaluate gate calibration on real queries.
Third, \hyem{} improves ontology-grounding retrieval, not LLM reasoning itself; downstream generation quality depends on how retrieved entities are used.

Fourth, our main reproducibility run uses 5k-node subsets (max depth 5--6). 
While this is sufficient to validate the end-to-end pipeline and to reproduce every plot on commodity hardware, larger ontologies raise an important scaling question: the distortion factor $\kappa(R)=\sinh(R)/R$ grows roughly as $e^{R}/(2R)$.
However, our theoretical scaling analysis (Section~\ref{sec:theoretical_scale}, Figure~\ref{fig:theoretical_scaling}) shows that the radius needed to represent a depth-$D$ $b$-ary hierarchy scales as $R\approx \tfrac{D\log b}{d-1}$, so for moderate dimensions (e.g., $d=32$) even depths on the order of $D\approx 30$ imply $R$ in the low single digits for typical branching factors.
In that regime $\kappa(R)$ remains modest (and the required ANN oversampling is manageable), whereas challenges may arise at extreme radii (e.g., $R\ge 7$) where $\kappa(R)$ becomes very large.
We report 20k scale-up results in Section~\ref{sec:scale} (Table~\ref{tab:scale_20k}), which provide empirical support for operating in a moderate-radius regime where tangent-space indexing maintains high accuracy.

Fifth, our default Euclidean-to-hyperbolic adapter is intentionally simple (linear in tangent space) for auditability and stability.
A more expressive non-linear adapter could, in principle, better align a fixed text encoder to a curved space, but it also increases the risk of overfitting and of pushing queries toward large radii where indexability may degrade.
We therefore treat non-linear adapters as an ablation (Section~\ref{sec:extra_ablate}) rather than a core dependency.

Finally, our hyperbolic entity encoder is deliberately lightweight.
Modern hyperbolic graph neural networks and partial-order models (e.g., entailment cones) may offer improved representations, and \hyem{} is compatible with them because it only requires hyperbolic embeddings with bounded radii.
We include comparisons to alternative hyperbolic encoders in Section~\ref{sec:extra_ablate}.

\paragraph{Broader impact}
Accurate ontology grounding can improve transparency and controllability in biomedical LLM systems by anchoring answers to standardized concepts.
At the same time, ontologies can encode biases and gaps; retrieval improvements should be accompanied by audits that stratify performance by depth and by concept frequency.
\hyem{} avoids patient data and operates on public resources.

\section{Conclusion}
\label{sec:conclusion}

This paper presented \hyem{}, a query-adaptive hyperbolic retrieval layer for biomedical ontologies that is compatible with Euclidean vector databases.
\hyem{} treats indexability as a first-class constraint by learning radius-controlled hyperbolic embeddings and by performing candidate search in the origin tangent space.
It further introduces a lightweight gate that softly mixes Euclidean semantic similarity with hyperbolic hierarchy distance, designed to address concerns that hyperbolic methods may not benefit all query types uniformly.
We provided explicit geometric bounds linking radius to tangent-space distortion and to hierarchy capacity, and we specified a lightweight experimental protocol that distinguishes query families and supports reproducibility.
Our experiments on biomedical ontology subsets suggest that the approach can preserve most Euclidean baseline performance on entity-centric queries while improving performance on taxonomy-navigation and mixed-intent queries.

\section*{Acknowledgements}
The work was supported in part by the 2022-2024 Masaru Ibuka Foundation Research Project on Oriental Medicine, 2020-2025 JSPS A3 Foresight Program (Grant No. JPJSA3F20200001), 2022-2024 Japan National Initiative Promotion Grant for Digital Rural City, 2023 and 2024 Waseda University Grants for Special Research Projects (Nos. 2023C-216 and 2024C-223), 2023-2024 Waseda University Advanced Research Center Project for Regional Cooperation Support, and 2023-2024 Japan Association for the Advancement of Medical Equipment (JAAME) Grant.

\section*{Declarations}

\begin{itemize}
\item \textbf{Conflict of interest:} The authors declare no competing interests.
\item \textbf{Ethics approval:} Not applicable.
\item \textbf{Consent to participate:} Not applicable.
\item \textbf{Consent for publication:} Not applicable.
\item \textbf{Data availability:} All datasets used in this study are publicly available: HPO (\url{https://hpo.jax.org/}), DO (\url{https://disease-ontology.org/}), and MeSH (\url{https://www.nlm.nih.gov/mesh/}).
\item \textbf{Code availability:} All source code, experimental configurations, and preprocessing scripts are publicly available at GitHub (\url{https://github.com/oudeng/HyEm}). A persistent archive of the code is deposited on Zenodo with DOI: \href{https://doi.org/10.5281/zenodo.18371905}{10.5281/zenodo.18371906}.
\item \textbf{Author contribution:} O.D.: Conceptualization, Methodology, Software, Data Curation, Validation, Formal Analysis, Investigation, Writing – Original Draft, 
      Writing – Review \& Editing, Visualization.
S.N.: Resources, Supervision, Funding Acquisition.
A.O.: Supervision, Funding Acquisition.
Q.J.: Supervision, Funding Acquisition.
\end{itemize}


\appendix

\section{Mathematical Details and Proofs}
\label{app:math}

This appendix collects the mathematical material omitted from the main paper.
We work in $d$-dimensional hyperbolic space $\Hspace^d$ with curvature $-1$.
All statements extend to curvature $-c$ by rescaling distances by $1/\sqrt{c}$.

\subsection{Hyperbolic models used in practice}
\label{app:models}

\paragraph{Poincar\'e ball}
The Poincar\'e ball model represents $\Hspace^d$ as $\Ball^d=\{x\in\R^d: \lVert x\rVert<1\}$ with metric
$\mathrm{d}s^2 = \frac{4}{(1-\lVert x\rVert^2)^2}\,\mathrm{d}x^2$.
The origin log/exp maps for curvature $-1$ are
\begin{align}
  \exp_0(u) &= \tanh\!\left(\frac{\lVert u\rVert}{2}\right) \frac{u}{\lVert u\rVert},
  \\
  \log_0(x) &= 2\,\operatorname{artanh}(\lVert x\rVert)\,\frac{x}{\lVert x\rVert}.
\end{align}
The geodesic distance can be written as
$ d_{\Hspace}(x,y)=\operatorname{arcosh}\!\left(1+2\frac{\lVert x-y\rVert^2}{(1-\lVert x\rVert^2)(1-\lVert y\rVert^2)}\right)$.

\paragraph{Lorentz (hyperboloid) model}
The Lorentz model represents $\Hspace^d$ as a two-sheeted hyperboloid in $\R^{d+1}$ equipped with the Lorentzian inner product
\begin{equation}
  \langle y,z\rangle_L \;=\; -y_0 z_0 + \sum_{i=1}^d y_i z_i.
\end{equation}
The manifold is $\{y\in\R^{d+1}: \langle y,y\rangle_L=-1,\ y_0>0\}$.
Geodesic distance is
\begin{equation}
  d_{\Hspace}(y_1,y_2) \,=\, \operatorname{arcosh}\big(-\langle y_1,y_2\rangle_L\big).
\end{equation}
For numerical stability, implementations typically compute $\operatorname{arcosh}(z)$ via
$\log\big(z+\sqrt{z^2-1}\big)$ with clamping to ensure $z\ge 1$.

\paragraph{Origin log/exp in the Lorentz model}
Let $o=(1,0,\ldots,0)$ be the Lorentz origin.
The tangent space $\tspace_o\Hspace^d$ can be identified with vectors $(0,u)$ where $u\in\R^d$.
Writing $\lVert u\rVert$ for the Euclidean norm in $\R^d$, the exponential map at $o$ is
\begin{equation}
  \exp_o(0,u) = \Big(\cosh(\lVert u\rVert),\; \sinh(\lVert u\rVert)\,\frac{u}{\lVert u\rVert}\Big),
\end{equation}
and the logarithmic map is
\begin{equation}
  \log_o(y)=\Big(0,\; \frac{\operatorname{arcosh}(y_0)}{\sqrt{y_0^2-1}}\,y_{1:d}\Big),
\end{equation}
where $y_{1:d}$ denotes the spatial components.
These formulas are standard and are implemented in common Riemannian optimization libraries.

\paragraph{Why the main theorems are model-agnostic}
Theorem~\ref{thm:distortion} in the main paper is stated using the intrinsic exponential map and normal coordinates, and therefore holds regardless of whether one parameterizes points by Poincar\'e or Lorentz coordinates.
In \hyem{}, we use Lorentz coordinates during optimization for stability, but store origin log-mapped tangent vectors for indexing.

\subsection{Normal coordinates and metric comparison}
\label{app:normal_coords}

Fix the origin $o\in\Hspace^d$.
The exponential map $\exp_o: \tspace_o\Hspace^d\cong\R^d \to \Hspace^d$ defines {normal coordinates} around $o$.
Because hyperbolic space is complete, simply connected, and has constant negative curvature, $\exp_o$ is a global diffeomorphism.

In normal polar coordinates $(r,\theta)$ (radius $r\ge 0$ and direction $\theta\in\mathbb{S}^{d-1}$), the hyperbolic metric takes the form
\begin{equation}
\label{eq:hyperbolic_polar_metric}
  \mathrm{d}s_{\Hspace}^2 = \mathrm{d}r^2 + \sinh^2(r)\, \mathrm{d}\theta^2,
\end{equation}
where $\mathrm{d}\theta^2$ is the standard metric on the unit sphere.
In the same normal coordinates, the Euclidean metric on the tangent space is
\begin{equation}
\label{eq:euclidean_polar_metric}
  \mathrm{d}s_{\mathrm{E}}^2 = \mathrm{d}r^2 + r^2\,\mathrm{d}\theta^2.
\end{equation}

The key comparison inequality on a radius-bounded region $r\le R$ is
\begin{equation}
\label{eq:sinh_bounds}
  r \le \sinh(r) \le \frac{\sinh(R)}{R}\, r.
\end{equation}
The lower bound follows from $\sinh(r)\ge r$ for $r\ge 0$.
For the upper bound, note that $\sinh(r)/r$ is increasing for $r>0$, hence its maximum on $[0,R]$ is attained at $R$.

\subsection{Proof of Theorem~\ref{thm:distortion}}
\label{app:proof_distortion}

Recall Theorem~\ref{thm:distortion}: for $x=\exp_o(u)$ and $y=\exp_o(v)$ with $\lVert u\rVert,\lVert v\rVert\le R$,
\begin{equation}
  \lVert u-v\rVert \le d_{\Hspace}(x,y) \le \kappa(R)\,\lVert u-v\rVert,\quad \kappa(R)=\sinh(R)/R.
\end{equation}

\begin{proof}
Consider any piecewise smooth curve $\gamma:[0,1]\to \Hspace^d$ from $x$ to $y$.
Let $\alpha=\log_o\circ\gamma$ be the corresponding curve in normal coordinates (so $\gamma=\exp_o\circ\alpha$).
Write $\alpha(t)=(r(t),\theta(t))$.

Because $\exp_o$ is a diffeomorphism, the hyperbolic length of $\gamma$ equals the length of $\alpha$ computed under the hyperbolic metric in normal coordinates.
Using \eqref{eq:hyperbolic_polar_metric},
\begin{equation}
  L_{\Hspace}(\gamma)=\int_0^1 \sqrt{\dot r(t)^2 + \sinh^2(r(t))\,\lVert \dot\theta(t)\rVert^2}\,\mathrm{d}t.
\end{equation}
Similarly, the Euclidean length of $\alpha$ under \eqref{eq:euclidean_polar_metric} is
\begin{equation}
  L_{\mathrm{E}}(\alpha)=\int_0^1 \sqrt{\dot r(t)^2 + r(t)^2\,\lVert \dot\theta(t)\rVert^2}\,\mathrm{d}t.
\end{equation}

Assume $\alpha(t)$ stays in the closed ball $\lVert \alpha(t)\rVert\le R$.
This holds for the Euclidean straight-line segment between $u$ and $v$ because the tangent ball is convex.
Using \eqref{eq:sinh_bounds}, for all $t$ with $r(t)\le R$ we have
\begin{align}
  \sqrt{\dot r^2 + r^2\lVert\dot\theta\rVert^2}
  &\le \sqrt{\dot r^2 + \sinh^2(r)\lVert\dot\theta\rVert^2}
  \le \frac{\sinh(R)}{R}\,\sqrt{\dot r^2 + r^2\lVert\dot\theta\rVert^2}.
\end{align}
Integrating yields
\begin{equation}
  L_{\mathrm{E}}(\alpha) \le L_{\Hspace}(\gamma) \le \kappa(R)\,L_{\mathrm{E}}(\alpha),
\end{equation}
with $\kappa(R)=\sinh(R)/R$.

Now take the infimum over all curves $\gamma$ connecting $x$ and $y$.
The left inequality implies
\begin{equation}
  d_{\Hspace}(x,y)=\inf_{\gamma} L_{\Hspace}(\gamma)
  \ge \inf_{\alpha} L_{\mathrm{E}}(\alpha) = \lVert u-v\rVert,
\end{equation}
since the shortest Euclidean curve between $u$ and $v$ is the straight segment.
For the upper bound, choose the Euclidean straight segment $\alpha(t)=(1-t)u+tv$.
It stays inside the radius-$R$ tangent ball, so $\gamma(t)=\exp_o(\alpha(t))$ is valid and
\begin{equation}
  d_{\Hspace}(x,y)\le L_{\Hspace}(\gamma) \le \kappa(R)\,L_{\mathrm{E}}(\alpha)=\kappa(R)\,\lVert u-v\rVert.
\end{equation}
\end{proof}

\subsection{Proof of Proposition~\ref{prop:capacity}}
\label{app:proof_capacity}

Proposition~\ref{prop:capacity} links the radius budget to the depth/branching of a hierarchy.

\begin{proof}[Proof sketch]
Let $N$ be the number of leaf nodes (or any set of nodes) that must be mutually separated.
Assume we embed these $N$ nodes into a hyperbolic ball of radius $R$ such that every pair of distinct embedded nodes is at hyperbolic distance at least $\varepsilon$.
Then the open hyperbolic balls of radius $\varepsilon/2$ around each embedded node are disjoint.
All these disjoint balls lie within the hyperbolic ball of radius $R+\varepsilon/2$.
Therefore
\begin{equation}
  N\cdot \mathrm{Vol}_{\Hspace}(B_{\varepsilon/2}) \le \mathrm{Vol}_{\Hspace}(B_{R+\varepsilon/2}).
\end{equation}

In $\Hspace^d$, the volume of a radius-$r$ ball grows on the order of $\exp((d-1)r)$ for moderate/large $r$.
Solving the inequality for $R$ yields
\begin{equation}
  R \ge \frac{\log N}{d-1} - O\!\left(\frac{\log(1/\varepsilon)}{d-1}\right).
\end{equation}
For a $b$-ary tree of depth $D$, $N\approx b^D$, giving the stated bound.
\end{proof}

\subsection{Proof of Proposition~\ref{prop:router}}
\label{app:proof_router}

Proposition~\ref{prop:router} is an identity that decomposes routed risk into the oracle risk plus misrouting regret.

\begin{proof}
For each query $q$, define the oracle loss as $\ell^*(q)=\min\{\ell(q,\pi_E),\ell(q,\pi_H)\}$.
If the latent intent is $Z=H$ (hierarchy-navigation), then by definition
\begin{equation}
  \ell(q,\pi_E) = \ell(q,\pi_H) + \Delta_H(q),
\end{equation}
where $\Delta_H(q)=\ell(q,\pi_E)-\ell(q,\pi_H)\ge 0$.
If the router predicts $\hat Z=E$, it incurs the larger loss $\ell(q,\pi_E)=\ell^*(q)+\Delta_H(q)$; otherwise it incurs $\ell^*(q)$.
An analogous statement holds when $Z=E$, with regret $\Delta_E(q)=\ell(q,\pi_H)-\ell(q,\pi_E)\ge 0$ when the router incorrectly chooses hyperbolic.
Combining the two cases and taking expectation yields the expression in Proposition~\ref{prop:router}.
\end{proof}

\section{Experimental Protocol and Reproducibility Details}
\label{app:exp}

This appendix specifies dataset construction, query generation, model training, and evaluation.
The goal is a {publishable but lightweight} experimental suite that can be reproduced with a small Python repository and commodity hardware.
We deliberately avoid patient data and proprietary resources.

\subsection{Datasets and preprocessing}

\paragraph{Human Phenotype Ontology (HPO)}
We download an HPO release and parse the ontology graph from the OBO/OWL file, extracting (i) concept IDs, (ii) preferred labels, (iii) synonyms, (iv) definitions (when available), and (v) \texttt{is\_a} edges.
We drop obsolete terms and keep the canonical \texttt{is\_a} backbone.
HPO is a deep taxonomy with rich synonymy, which makes it useful for both entity-centric and hierarchy-navigation evaluations \cite{Kohler2021HPO}.

\paragraph{Disease Ontology (DO)}
We use an open DO release and extract the \texttt{is\_a} subgraph and text fields in the same manner as HPO \cite{Schriml2022DO}.
DO provides a complementary hierarchy whose surface strings are often less redundant than HPO, making it a useful stress test for semantic similarity baselines.

\paragraph{MeSH}
MeSH descriptors provide an explicit hierarchy through tree numbers \cite{NLMMeSH}.
We convert each tree number prefix relation into a parent--child edge and treat descriptors as nodes.
Because MeSH can be large, we optionally restrict to a small set of major branches that are most relevant to disease/phenotype grounding.

\paragraph{Subset construction (size control)}
To keep experiments tractable, we construct size-controlled subsets by sampling subtrees while preserving depth statistics.
In this version we report 5k-node subsets for HPO and DO with a single deterministic seed (seed=0), matching our reproducible pipeline configuration.
The released scripts support larger subsets (e.g., 20k/50k) and multiple random seeds as optional extensions.
We record the maximum depth $D$ and an empirical branching factor $b$ (average out-degree in the \texttt{is-a} graph after optionally converting multiple inheritance to a spanning arborescence for measurement).
These statistics can be used to initialize $(d,R)$ using Proposition~\ref{prop:capacity}.

\paragraph{Text fields}
For each entity $v$, we define the canonical text $\tau(v)$ as the concatenation of the preferred label and one definition sentence (when present).
Synonyms are kept as {query variants} rather than separate nodes.
We remove exact duplicate synonym strings to avoid inflating the query set.

\subsection{Query taxonomy and benchmark generation}
\label{app:querygen}

We generate three query families (Q-E, Q-H, Q-M) corresponding to Section~\ref{sec:taxonomy}.
Concrete natural-language templates are listed in \ref{app:templates}. 
All generation is deterministic given a random seed.

\paragraph{Entity-centric queries (Q-E)}
For each node $v$, we sample up to $S$ synonym strings and optionally one definition snippet.
Each sampled string becomes a query whose ground-truth target is $v$.
We additionally generate simple rephrasings such as prefix/suffix prompts (e.g., ``Define: \{label\}''), which are easy to create programmatically.
The evaluation objective is standard entity retrieval (Hits@$k$, MRR, nDCG).

\paragraph{Taxonomy-navigation queries (Q-H)}
We generate queries that explicitly request hierarchical neighbors.
For each node $v$ (with at least one parent), we create a parent query whose ground-truth set is the set of immediate parents $\mathrm{Pa}(v)$.
For nodes with children, we create a children query whose ground truth is $\mathrm{Ch}(v)$.
For ancestor/descendant queries, we use the transitive closure $\mathrm{Anc}(v)$ and $\mathrm{Des}(v)$.
We evaluate both ranked retrieval (Hits@$k$) and set-oriented retrieval (macro/micro F1 for ancestor sets).
Because ontologies are DAGs, multiple inheritance can lead to multi-parent ground truth; we preserve this and evaluate set retrieval accordingly.

\paragraph{Mixed-intent queries (Q-M)}
Mixed-intent queries are designed to require both semantic similarity and a weak hierarchy constraint.
We implement a lightweight proxy for ``same specificity'': we bucket nodes by depth (e.g., quartiles) and define a sibling-style candidate set as nodes that share at least one parent with $v$ {and} fall in the same depth bucket.
Queries use templates such as ``conditions similar to \{label\} at a similar specificity''.
We treat the sibling set as ground truth.
This task is not meant to be a perfect model of real-world relatedness; rather, it creates a controlled setting where purely hyperbolic distance and purely Euclidean similarity can disagree, motivating soft mixing.

\paragraph{Train/validation/test split}
We split {entities} into train/validation/test (e.g., 80/10/10).
All Q-E and Q-H queries derived from an entity inherit its split.
This prevents leakage where synonyms of the same entity appear in both train and test.
For Q-H queries involving parents/children across splits, we keep the query in the split of the source node but allow targets to be anywhere in the ontology, reflecting realistic retrieval.

\subsection{Models and training}

\paragraph{Text encoder}
We use a sentence encoder to embed $\tau(v)$ and query strings into $e_v,e_q\in\R^{d_e}$.
The default setting uses a small Sentence-Transformer style model for speed.
We keep the encoder frozen by default to ensure lightweight training; optional finetuning can be reported as an ablation.

\paragraph{\hyem{} components trained}
We train (i) hyperbolic entity embeddings $\{x_v\}$, (ii) the Euclidean-to-hyperbolic adapter $g$, and (iii) the logistic gate for $\alpha(q)$.
We recommend training (i) and (ii) jointly, because text alignment helps prevent purely structural collapse.
The gate is trained separately using the automatically labeled Q-E vs Q-H queries.

\paragraph{Hierarchy supervision and negatives}
For each \texttt{is-a} edge $(p\prec c)$, we sample negatives by choosing nodes that are not descendants of $p$.
Uniform negatives are sufficient for a first submission.
An optional ``hard negative'' variant samples negatives that are text-similar to $c$ (nearest neighbors under $e_v$), which increases difficulty while staying lightweight.

\paragraph{Radius budget implementation}
We implement the radius budget using the soft penalty $\mathcal{L}_R$ and, optionally, explicit clipping in tangent space after each optimizer step.
We report the fraction of points that violate the budget during training as a diagnostic.

\paragraph{Lorentz vs Poincar\'e implementation}
We implement optimization in the Lorentz model for stability and convert to/from tangent coordinates as needed.
To quantify the stability benefit, we log NaN occurrences, gradient norms, and the distribution of radii $\lVert\log_0(x_v)\rVert$.

\subsection{Indexing, candidate pooling, and reranking}

\paragraph{Indexes}
We build two Euclidean ANN indexes: one over tangent vectors $u_v=\log_0(x_v)$ (hyperbolic candidate generator) and one over text embeddings $e_v$ (Euclidean candidate generator).
Both can be implemented with FAISS or HNSW.

\paragraph{Candidate pooling}
Given a query $q$, we retrieve $L_H$ candidates from the tangent index and $L_E$ candidates from the text index.
We take the union $C=C_H\cup C_E$ and rerank candidates using the mixed score (Eq.~\eqref{eq:mix_score}).
We report ablations where $C_E$ is removed or where $L_E$ is reduced, to quantify the trade-off between robustness and latency.

\paragraph{Indexability stress test}
To validate the indexability analysis, we compute the true hyperbolic top-$k$ neighbors by brute force on small subsets (or by exact computation) and measure recall@$k$ of the reranked output when candidate generation is restricted to the tangent ANN results.
We sweep $L_H$ and $R$ to generate Figure~\ref{fig:recall_curve}.

\subsection{Metrics and reporting}

\paragraph{Retrieval metrics}
We report Hits@$k$, MRR, and nDCG@$k$ for Q-E.
For Q-H we report parent Hits@$k$ and ancestor retrieval F1 (macro/micro).
For Q-M we report Hits@$k$ and nDCG@$k$ to capture set-like relevance.
Metric definitions are given in \ref{app:metrics}. 

\paragraph{Gate metrics}
For the gate, we report accuracy, precision/recall for the hierarchy-navigation class, and AUC.
We also report a calibration curve or reliability diagram if space permits.

\paragraph{Stratified evaluation by depth}
Ontology depth correlates with ambiguity and rarity.
We therefore report retrieval performance stratified by depth buckets (quartiles) for Q-E and Q-H.
This makes ``hierarchy helps'' effects interpretable and prevents over-claiming improvements driven only by shallow nodes.

\section{Query Templates, Metric Definitions, and Additional Notes}
\label{app:extra}

This appendix provides concrete query templates for the benchmark construction and precise definitions of the metrics used in the main paper.
It also records two practical notes: (i) how to align score scales for soft mixing, and (ii) simple numerical-stability checks for Lorentz computations.

\subsection{Query templates}
\label{app:templates}

All queries in our benchmark are generated automatically by instantiating natural-language templates with ontology labels and synonyms.
We keep templates intentionally simple so that the entire benchmark can be regenerated from raw ontology releases.
Table~\ref{tab:templates} lists a recommended template set.

\begin{table}[t]
\centering
\footnotesize
\caption{Example query templates used to generate Q-E/Q-H/Q-M query families. Curly braces denote placeholders filled by ontology fields.}
\label{tab:templates}
\begin{tabular}{ll}
\toprule
Query family & Template examples \\
\midrule
Q-E (entity-centric) & ``\{synonym\}''; ``What is \{label\}?''; ``Define \{label\}.'' \\
Q-H (taxonomy navigation) & ``What are subtypes of \{label\}?''; ``What is the parent \\
&of \{label\}?''; ``What are ancestors of \{label\}?'' \\
Q-M (mixed intent) & ``Concepts similar to \{label\} at the same specificity.'';\\
& ``Siblings of \{label\} in the ontology.'' \\
\bottomrule
\end{tabular}
\end{table}

\paragraph{Ground truth construction}
For Q-E, the ground truth is a single entity $v$.
For Q-H, the ground truth is derived from the \texttt{is-a} graph, for example the set of immediate parents, all ancestors, or all descendants.
For Q-M, we use sibling sets or depth-bucket neighborhoods as a lightweight proxy for ``same specificity'' constraints.

\subsection{Metric definitions}
\label{app:metrics}

Let $\pi(q)$ be the ranked list returned by a retrieval method for query $q$.

\paragraph{Hits@$k$}
For single-label Q-E queries with ground truth entity $v^*$,
\begin{equation}
\mathrm{Hits@}k(q)=\mathbf{1}\{v^*\in \{\pi_1(q),\ldots,\pi_k(q)\}\}.
\end{equation}
For multi-label queries (e.g., parent sets), Hits@$k$ is defined analogously with membership in the ground-truth set.

\paragraph{MRR}
For single-label queries, mean reciprocal rank is
\begin{equation}
\mathrm{RR}(q)=\frac{1}{\min\{i: \pi_i(q)=v^*\}},\qquad \mathrm{MRR}=\mathbb{E}[\mathrm{RR}(q)].
\end{equation}

\paragraph{nDCG@$k$}
For multi-label queries with graded or binary relevance, we use the standard discounted cumulative gain with logarithmic discount.
In the simplest binary case, relevance is 1 for members of the ground-truth set and 0 otherwise.

\paragraph{Parent Hits@$k$ and ancestor F1}
For a node $v$, let $P(v)$ be the set of immediate parents and $A(v)$ the set of transitive ancestors in the \texttt{is-a} DAG.
Parent Hits@$k$ is Hits@$k$ computed with ground truth $P(v)$.
For ancestor retrieval, given the top-$k$ retrieved set $R_k(q)=\{\pi_1(q),\ldots,\pi_k(q)\}$, we compute
\begin{equation}
\begin{aligned}
&\mathrm{Precision}(q)=\frac{|R_k(q)\cap A(v)|}{|R_k(q)|},\quad
\mathrm{Recall}(q)=\frac{|R_k(q)\cap A(v)|}{|A(v)|},\\[6pt]
&\mathrm{F1}(q)=\frac{2\,\mathrm{Precision}(q)\,\mathrm{Recall}(q)}{\mathrm{Precision}(q)+\mathrm{Recall}(q)}.
\end{aligned}
\end{equation}
Macro-F1 averages $\mathrm{F1}(q)$ over queries, while micro-F1 aggregates counts before computing the ratio.

\paragraph{Indexability recall@$k$}
To evaluate the tangent-space candidate generator, we compute the true hyperbolic top-$k$ neighbors $T_k(q)$ by exact distance computation (brute force) on a subset.
Given the candidate pool $C$ produced by Euclidean ANN, recall@$k$ is
\begin{equation}
\mathrm{Recall@}k(q)=\frac{|T_k(q)\cap C|}{k}.
\end{equation}
This metric isolates the effect of the ANN stage from reranking.

\paragraph{Depth-stratified reporting}
We compute node depth with respect to a chosen root (for DAGs, the minimum depth over all root-to-node paths).
We report metrics within depth buckets (e.g., quartiles) to check whether improvements come from deeper regions where hierarchy matters most.

\subsection{Score-scale alignment for soft mixing}
\label{app:scales}

Equation~(\ref{eq:mix_score}) mixes cosine similarity (typically in $[-1,1]$) with negative hyperbolic distance (unbounded below).
To make $\alpha(q)$ interpretable, it is helpful to align score scales.
A simple approach is temperature scaling:
\begin{equation}
\tilde s_E=\frac{s_E}{\tau_E},\qquad \tilde s_H=\frac{s_H}{\tau_H},\qquad \text{score}=\alpha\,\tilde s_H+(1-\alpha)\,\tilde s_E,
\end{equation}
where $\tau_E,\tau_H>0$ are tuned on the validation set.
In practice, we recommend choosing $\tau_H$ so that the median positive-pair distance corresponds to a score magnitude comparable to cosine similarity.

\subsection{Numerical stability notes for Lorentz distance}
\label{app:lorentz_stability}

Lorentz distances involve $\operatorname{arcosh}(z)$ with $z=-\langle y_1,y_2\rangle_L\ge 1$.
Finite precision can lead to $z<1$ by a small margin, causing NaNs.
A standard safeguard is to clamp $z$ to $1+\epsilon$ before applying $\operatorname{arcosh}$.
We also recommend logging simple diagnostics during training: (i) NaN/Inf counts, (ii) gradient norm percentiles, and (iii) the distribution of radii $\lVert\log_0(x_v)\rVert$.
These diagnostics can be reported as part of the Lorentz vs Poincar\'e stability ablation.


\bibliographystyle{elsarticle-num}
\bibliography{references_HyEm}

\end{document}